\pgfplotsset{compat=1.14}
\definecolor{DarkGreen}{rgb}{0.2,0.6,0.2}
\definecolor{purple}{rgb}{0.6,0.3,0.8}
\theoremstyle{plain}
\newtheorem{theorem}{Theorem}
\newtheorem*{theorem*}{Theorem}
\newtheorem{proposition}{Proposition}
\newtheorem*{proposition*}{Proposition}
\theoremstyle{definition}
\newtheorem{lemma}{Lemma}
\newtheorem*{lemma*}{Lemma}
\newtheorem{definition}{Definition}
\newtheorem*{definition*}{Definition}
\theoremstyle{remark}
\newtheorem{example}{Example}
\newtheorem*{example*}{Example}
\newtheorem*{assumption*}{Assumption}
\newtheorem{fact}{Fact}
\newtheorem*{fact*}{Fact}
\newtheorem{corollary}{Corollary}
\newtheorem*{corollary*}{Corollary}
\newtheorem*{question*}{Question}
\newtheorem{remark}{Remark}
\def\laweq{\buildrel \mathrm{d} \over =}
\renewcommand{\d}{\mathrm{~d}}
\newcommand{\VaR}{\mathrm{VaR}}
\newcommand{\ES}{\mathrm{ES}}
\renewcommand{\P}{\mathbb{P}}
\newcommand{\E}{\mathbb{E}}
\newcommand{\R}{\mathbb{R}}
\renewcommand{\L}{\mathcal{L}}
\begin{document}

%%%%%title section %%%%%
\title{Partial comonotonicity and distortion riskmetrics}
\author{Muqiao Huang\thanks%
  {Department of Statistics and Actuarial Science,
  University of Waterloo, Canada.
 \href{mailto:m5huang@uwaterloo.ca}{m5huang@uwaterloo.ca}.}} 
\maketitle
\begin{abstract} We establish a connection between dependence structures and subclasses of distortion riskmetrics under which the latter are additive. A new notion of positive dependence, called partial comonotonicity, is developed, which nests the existing concepts of comonotonicity and single-point concentration. 
For two random variables, being comonotonic with a third one does not imply that they are comonotonic; instead, this defines an instance of partial comonotonicity. 
Any specific instance of partial comonotonicity uniquely characterizes a class of distortion riskmetrics through additivity under this dependence structure.  
An implication of this result is the characterization of the Expected Shortfall using single-point concentration.
\end{abstract}
\textbf{JEL}: D81\\
\textbf{Keywords}:  Partial comonotonicity, distortion riskmetrics, positive dependence, additivity

%%%%% main sections %%%%%
\section{Introduction}\label{sec:intro}

Positive dependence among risks is an important consideration in risk management (\cite{MFE15}). 
As the extreme form of positive dependence, comonotonicity models perfect comovement between random losses from a financial portfolio.
Additivity for comonotonic losses is a key property in the characterization of many risk measures and functionals, widely used in economics, finance and insurance (\cite{S89, WYP97, K01}). A weaker notion of positive dependence, known as $p$-concentration, models the following scenario: there is some catastrophic event with probability $(1-p)$, under which all assets in a portfolio perform worse than they would have if such an event had not occurred. In a characterization of the Expected Shortfall at level $p$ (\cite{WZ21}), $p$-concentration played a central role. 
The current paper aims to characterize subclasses of distortion risk measures, and more generally, distortion riskmetrics (not requiring the distortion function to be increasing, defined in \cite{WWW20b}),  via their additivity for specific dependence structures. We achieve the characterization by using a new collection of positive dependence structures called partial comonotonicity.

 Notably, $p$-concentration, upper comonotonicity (\cite{C09}), and lower, tail, and interval comonotonicity (\cite{ZD13}), as well as comonotonicity, are particular cases of partial comonotonicity. A broader generalization called weak comonotonicity was introduced in \cite{WZ20}. There, random variables are required to move in the same direction on average in a weighted sense. By choosing a large number of simple weights, partial comonotonicity can be recovered.

In this paper, we will define $K$-concentration for subsets $K\subseteq[0,1]$ and $g$-comonotonicity for left-continuous increasing functions $g$ on $[0,1]$. These are referred to as instances of partial comonotonicity. A dependence structure $D$ may refer to a copula or an instance of partial comonotonicity.
We will see that an instance $D$ of partial comonotonicity corresponds to a collection $\mathcal C_D$ of copulas. That is, a random vector has the dependence structure $D$ (from the class of partial comonotonicity) if and only if its copula is in $\mathcal C_D$.  For a dependence structure $D$, a functional $f$ is called $D$-additive if for any random variables $(X,Y)$ satisfying $D$, $f$ is additive for $(X,Y)$.
When $D$ is chosen as comonotonicity, 
we arrive at the usual comonotonic additivity, the defining property of Choquet integrals (\cite{S86}). The law-invariant case was characterized by \cite{Y87}. In this case, the Choquet integrals are also known as distortion risk measures, studied by \cite{DDGKV02, DVGKTV06, DKLT12}, and their non-monotone version is studied by \cite{WWW20}. On the other hand, $p$-concentration-additivity ($p$-additivity for short) is a characterizing property of the Expected Shortfall at level $p$,  as studied in \cite{WZ21} and \cite{AL24}.  Using the desired additivity properties to narrow down the potential choices of risk measures is a recurring theme in the study of risk measures; see \cite{MPST24, BKMS21, PWW25} for some recent developments. We follow this classic approach in this paper. Since any instance of partial comonotonicity is weaker than comonotonicity and the corresponding additivity properties are reversed in strength, any functional satisfying any instance of partial comonotonic additivity must be comonotonic additive. When studying the aggregation of risk, we focus on the dependence structure of the random losses, and hence law invariance is also a standard assumption. Therefore, we will be working with distortion risk measures and their non-monotone generalization, distortion riskmetrics.

In Section \ref{sec:set}, we formulate partial comonotonicity by using subsets $K$ of $[0,1]$. From this formulation, partial comonotonicity can be seen as a natural generalization of (single-point) $p$-concentration; thus, we call it $K$-concentration.   Section \ref{sec:func} develops an equivalent formulation of partial comonotonicity through
a parametrization by left-continuous increasing functions $g:[0,1]\to \R$. This formulation enables us to obtain an elegant result in Section \ref{sec:spectral}: for any random vector $\mathbf X$, a spectral risk measure with risk spectrum $g$ is additive for $\mathbf X$ precisely when partial comonotonicity, parametrized by $g$, holds for $\mathbf X$.  Section \ref{sec:app} provides two applications. First, we demonstrate the derivation of additivity conditions for several well-known functionals in the literature. Second, we illustrate how restricting to given additivity conditions, narrows down the possible choices of risk measures. Section \ref{sec:ext} extends the domain of riskmetrics beyond $L^\infty$ and  Section \ref{sec:conclusion} concludes the paper.

\section{Notations, conventions, and  some basic facts}

Throughout, increasing functions are in the weak sense,
and inequalities among vectors are componentwise. 
%We say that a function $f$ is increasing if $x\leq y \implies f(x) \leq f(y)$ for all $x,y$ in the domain of $f$. For real vectors $\mathbf x = (x_1, \dots, x_d)$ and $\mathbf y = (y_1, \dots, y_d)$, we write $\mathbf x \leq \mathbf y$ if for each $i = 1, \dots, d$ we have  $x_i \leq y_i$. 
We typically use capital letters, such as $X$, to denote random variables. We use bold capital letters such as $\mathbf {X} $ to denote random vectors.

For real numbers $p, p_1, p_2, \dots$, we write $p_n\uparrow p$ to mean that $p_n$ is strictly increasing and  converges to $p$. Similarly, $p_n\downarrow p$ means that $p_n$ is strictly decreasing and converges to $p$. Given a function $f: \R \to \R$, we define  $f(x-)=\lim_{y\uparrow x}f(y)$ and $f(x+)=\lim_{y\downarrow x}f(y)$ when they exist.

\subsection{Risk functionals}
Fix an atomless probability space $(\Omega, \mathcal{F}, \mathbb{P})$, let $L^0$ be the collection of real-valued random variables and $L^\infty$ be the collection of essentially bounded random variables, with almost surely equal objects treated as identical. 

A risk functional $\rho: \mathcal X \to \R$  is a functional such that its domain $\mathcal X$ is a convex cone inside  $L^0$. We restrict to the case  $\mathcal X = L^\infty$ for most of this paper. A random vector $X = (X_1, \dots, X_d)$ is understood as an element of $\mathcal X^d$. 
Commonly used functionals include  the Value-at-Risk (VaR) at level $p\in (0,1)$: 
$$
\mathrm{VaR}_{p}(X)= \inf \{x \in \mathbb{R}: \mathbb{P} (X\leq x) \geq p\},
$$
 and the Expected Shortfall (ES) at level $p\in (0,1)$: 
$$
\mathrm{ES}_p(X)=\frac{1}{1-p} \int_p^1 \mathrm{VaR}_q (X) \mathrm{~d} q.
$$
We also define $\VaR_1(X)$ and $\ES_1(X)$ to be $\mathrm{ess\text{-}sup}(X)$, the essential supremum of $X$. For a fixed random variable $X$, the left quantile function $Q^-_X:(0,1]\to \R$  is defined as $Q^-_X(p)  = \VaR_p(X)$ and the right quantile function of $X$, $Q_X^+: [0,1)\to \R$ is defined as
$$Q^+_X(p) = \inf\{x\in \R: \P(X \leq x) > p\}.$$
We write the corresponding right-continuous Value-at-Risk as $\VaR^+_p(X)$. Note that at $p=0$, $Q_{X}^+(0)$ is the essential infimum.   We will be interested in real valued functions with domain $[0,1]$. So we define the function $Q_X:[0,1]\to \R$ to be $Q_X^-$ on $(0,1]$ and $Q_X^+$ at $0$. Note that $Q_X$ is right-continuous at $0$ and left-continuous on $(0,1]$.

A coherent risk measure (\cite{ADEH99}) is a functional $\rho$ satisfying the following properties:
\begin{itemize} 
    \item (M) Monotonicity: $\rho(X)\leq \rho(Y)$ for all $X,Y\in \mathcal X$ with $X\leq Y$,
    \item (TI) Translation invariance: $\rho(X+c)=\rho(X)+c$ for all $X \in \mathcal X$ and $c\in \mathbb{R}$,
    \item (SA) Subadditivity: $\rho(X+Y) \leq \rho(X)+\rho(Y)$ for all $X, Y \in \mathcal X$,
    \item (PH) Positive homogeneity: $\rho(\lambda X)=\lambda \rho(X)$ for all   $X,Y\in \mathcal X$ and $\lambda \geq 0$,
\end{itemize}
and condition (SA) may be substituted with the following:
\begin{itemize}
    \item (CX) Convexity: $\rho(\alpha X+(1-\alpha)Y)\leq \alpha \rho(X)+(1-\alpha)\rho(Y)$ for all $X, Y \in \mathcal X$ and $\alpha \in (0,1)$.
\end{itemize}

A risk functional satisfying (M) is referred to as a risk measure. If, in addition, (TI) is satisfied, the risk measure is monetary. These two properties allow us to interpret a monetary risk measure as a capital requirement.

A random vector $(X_1, \dots, X_d) \in \mathcal X^d$ is called comonotonic if there exists a random variable $Z$ and increasing functions $f_1, \dots, f_d: \R \to \R$ such that 
$$(X_1, \dots, X_d) = \left (f_1(Z), \dots, f_d(Z)\right) ~\mbox{almost~surely.}~$$

Two more commonly seen properties of risk functionals are
\begin{itemize}
     \item (LI) Law invariance: $\rho(X)=\rho(Y)$ whenever $X \laweq Y$.
     \item (CA) Comonotonic additivity: $\rho(X_1)+ \dots +\rho(X_d) = \rho(X_1 + \dots + X_d)$ whenever $(X_1, \dots, X_d)$ is comonotonic.
\end{itemize}

A spectral risk measure $\rho$ as introduced in \cite{A02} can be represented in the form 
$$\rho(X) =\rho_g(X) = \int_0^1 g(t)Q_X(t) \d t$$ where $g:[0,1]\to \R$ is non-negative, increasing and integrates to $1$. The function $g$ is referred to as a risk spectrum. Spectral risk measures are coherent and satisfy (LI) and (CA). \cite{K01} further showed that all coherent risk measures that satisfy (LI) and (CA) are of the form
$$\alpha \rho_g + (1-\alpha) \mathrm{ess\text{-}sup}$$
for some spectral risk measure $\rho_g$ and $\alpha \in [0,1]$.

A distortion riskmetric is a risk functional of the form
\begin{equation}\label{eq:DRM}
I_h(X) = \int_{-\infty}^0\left [(h(\P(X>x))-h(1)\right ]\d x + \int_0^\infty h(\P(X>x))\d x
\end{equation} 
where $h:[0,1]\to \R$, called the distortion function, is of bounded variation with $h(0)=0$. We denote by  $\mathcal H^{\rm bv}$ the set of all distortion functions.  
Distortion riskmetrics are characterized in \cite{WWW20} as risk functionals satisfying (LI) and (CA) together with the following continuity condition:
\begin{itemize}
     \item (UNC) Uniform norm-continuity: $\rho$ is uniformly continuous with respect to $L^\infty$-norm.
\end{itemize}

The conjugate function of a distortion function $h$ is defined as $\hat h(t) = h(1)-h(1-t)$. If $h$ is increasing with $h(1) = 1$ and $\hat h$  is differentiable with an increasing derivative $g$, we have $I_h = \rho_g$ so that spectral risk measures are a special case of distortion riskmetrics. 
The following facts from \cite{WWW20b} about distortion riskmetrics will be useful.
\begin{fact}\label{fact:choquet}
    Let $h\in \mathcal H^{\rm bv}$  and $X\in L^\infty$.
\begin{enumerate}[(a)]
    \item \label{choquet:rquantile} if $h$ is right-continuous, then $I_h(X) = \int_0^1 Q_X^+(1-p)\d h(p)$;
    \item \label{choquet:lquantile} if $h$ is left-continuous, then $I_h(X) = \int_0^1 Q_X(1-p)\d h(p)$;
    \item $I_h$ satisfies (PH);
    \item $I_h$  satisfies (CX) if and only if $h$ is concave.
\end{enumerate}
\end{fact}

\subsection{Tail events and concentration}

An event $A\in \mathcal F$ is called a tail event for a random variable $X$ if
$$X(\omega)\geq X(\omega^\prime) \mbox{~holds a.s. for all }\omega \in A, ~\omega^\prime \in A^c, $$
where $A^c$ stands for the complement of $A$. 
If a tail event $A$ has $\P(A)=1-p$ for $p\in [0,1]$, then $A$ is called a $p$-tail event. A random vector is said to be $p$-concentrated if all components share a common $p$-tail event.
We collect the following useful facts for $p$-concentration from \cite{WZ21}.
\begin{fact}\label{fact:pcon}
The following properties of  $p$-concentration hold.
\begin{enumerate}[(a)]
    \item \label{lem:com} A random vector is $q$-concentrated for all $q\in (0,1)$ if and only if it is comonotonic.
    \item \label{lem:sum}  Suppose that  $ (X_1, \dots, X_d)$  is $p$-concentrated for some $p\in(0,1)$. A set $A$ is a $p$-tail event of $ X_1 + \dots + X_d$ if and only if $A$ is a $p$-tail event for each $X_i$ for $i= 1, \dots, d$;
    \item \label{lem:ineq}   Let $p\in (0,1)$, if $  (X_1, \dots, X_d)$  is $p$-concentrated, then we have the inequalities
    $$\VaR_{p} \left(\sum_{i=1}^d X_i\right)\leq \sum_{i=1}^d \VaR_{p} (X_i) \leq \sum_{i=1}^d \VaR^+_{p} (X_i) \leq \VaR^+_{p} \left(\sum_{i=1}^d X_i\right).
    $$
\end{enumerate}
\end{fact}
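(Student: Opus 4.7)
The plan is to prove parts (a), (b), (c) in order, with each part leveraging the earlier ones. For (a), the forward implication follows from the representation $(X_1,\dots,X_d)=(f_1(Z),\dots,f_d(Z))$: for each $q\in(0,1)$, atomlessness of $(\Omega,\mathcal F,\P)$ yields an event $A$ with $\P(A)=1-q$ sandwiched between $\{Z>Q_Z^+(q)\}$ and $\{Z\geq Q_Z(q)\}$, and monotonicity of each $f_i$ makes $A$ a common $q$-tail event. For the converse, I would show directly that the copula of $(X_1,\dots,X_d)$ agrees with the upper Fr\'echet bound $\min(u_1,\dots,u_d)$. Fix $x_j\in\R$ for each $j$, set $u_j=F_{X_j}(x_j)$, and assume without loss of generality that $u_1=\min_j u_j$. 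Using $q$-concentration at $q=u_j$ for each $j\neq 1$, together with the tail-event quantile identities $\VaR_q(X)=\mathrm{ess\,sup}_{A^c} X$ and $\VaR_q^+(X)=\mathrm{ess\,inf}_A X$ (valid for any $q$-tail event $A$ of $X$), one shows $\P(X_1\leq x_1,\,X_j>x_j)=0$ for each $j$, pinning the joint distribution to the comonotonic copula.

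For (b), the ($\Leftarrow$) direction is immediate by summing componentwise inequalities. For ($\Rightarrow$), given a $p$-tail event $A$ of $\sum_i X_i$ and a common $p$-tail event $B$ from the $p$-concentration hypothesis, I would partition $\Omega$ into the four blocks $A\cap B$, $A\cap B^c$, $A^c\cap B$, $A^c\cap B^c$. For $\omega\in A\cap B^c$ and $\omega'\in A^c\cap B$, the componentwise tail property of $B$ forces $X_i(\omega)\leq X_i(\omega')$ for every $i$, while the sum-tail property of $A$ forces $\sum_i X_i(\omega)\geq\sum_i X_i(\omega')$. The two together pin $X_i(\omega)=X_i(\omega')$ a.s., and a Fubini-style argument then shows each $X_i$ is essentially constant on $A\cap B^c$ and on $A^c\cap B$, taking the same value $c_i$ on both. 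A routine case check across the four blocks then confirms $X_i(\omega)\geq X_i(\omega')$ whenever $\omega\in A$, $\omega'\in A^c$, so $A$ is a common $p$-tail event.

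For (c), pick a common $p$-tail event $A$ from $p$-concentration; by (b), $A$ is also a $p$-tail event of $\sum_i X_i$. Applying the identities $\VaR_p(X)=\mathrm{ess\,sup}_{A^c} X$ and $\VaR_p^+(X)=\mathrm{ess\,inf}_A X$ to each $X_i$ and to $\sum_i X_i$, the outer two inequalities in the chain follow from the general facts $\mathrm{ess\,sup}_{A^c}(\sum_i X_i)\leq\sum_i \mathrm{ess\,sup}_{A^c} X_i$ and $\mathrm{ess\,inf}_A(\sum_i X_i)\geq\sum_i \mathrm{ess\,inf}_A X_i$, with the middle one automatic from $\VaR_p\leq\VaR_p^+$. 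The main obstacle I anticipate is the ($\Rightarrow$) direction of (b): the Fubini-style step pinning each $X_i$ to a constant on the two conflict blocks is the crux and requires careful handling of product-measure almost-sure statements. The converse of (a), though conceptually the deepest since it upgrades tail-event structure at all levels to a copula identity, becomes routine once these tail-event quantile identities are in hand.
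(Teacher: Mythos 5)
The paper itself does not prove this statement: Fact \ref{fact:pcon} is imported from Wang and Zitikis (2021) with no internal argument, so there is no in-paper proof to compare against; yours is a self-contained reconstruction. Your plans for (b) and (c) are sound and essentially the standard ones: in (b), the four-block decomposition works --- summing the componentwise inequalities from $B$ against the sum-inequality from $A$ forces equality on $(A\cap B^c)\times(A^c\cap B)$, the Fubini step pins each $X_i$ to one constant $c_i$ on both conflict blocks (handle separately the degenerate case $\P(A\cap B^c)=\P(A^c\cap B)=0$, where $A=B$ a.s.), and the four-case check then goes through because a.e.\ on $B$ one has $X_i\geq c_i$ and a.e.\ on $B^c$ one has $X_i\leq c_i$. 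Given (b) and the identities $\VaR_p(X)=\mathrm{ess\,sup}_{A^c}X$, $\VaR^+_p(X)=\mathrm{ess\,inf}_{A}X$ for a $p$-tail event $A$, part (c) is exactly the sub/superadditivity of $\mathrm{ess\,sup}$ and $\mathrm{ess\,inf}$ as you say.

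Part (a) has two concrete soft spots. Forward direction: your sandwich is too wide. If $Q_Z(q)<Q^+_Z(q)$ with mass at both endpoints (e.g.\ $Z$ uniform on $\{0,1\}$, $q=1/2$), then $\{Z>Q^+_Z(q)\}=\emptyset$ and $\{Z\geq Q_Z(q)\}=\Omega$, so the constraints admit $A=\{Z=0\}$, which is not a tail event for $Z$ nor for $f_i(Z)$ with $f_i$ strictly increasing; so ``monotonicity of each $f_i$ makes $A$ a common $q$-tail event'' fails for some admissible $A$. The repair is to sandwich between $\{Z>Q_Z(q)\}$ and $\{Z\geq Q_Z(q)\}$, so that $A$ varies only on the level set $\{Z=Q_Z(q)\}$, or simply to take $A=\{U>q\}$ for a distributional transform $U$ of $Z$. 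Converse direction: the step ``$\P(X_1\leq x_1,\,X_j>x_j)=0$'' from concentration at $q=u_j$ has an unaddressed boundary case. On the common $u_j$-tail event $A$ the identities give $\{X_j>x_j\}\subseteq A$ a.s.\ and $X_1\geq Q^+_{X_1}(u_j)$ on $A$; when $Q^+_{X_1}(u_j)>x_1$ you are done, but the remaining case forces $u_j=u_1$ and $x_1=Q^+_{X_1}(u_1)$, where you only get $X_1\geq x_1$ on $A$, which does not exclude an atom of $X_1$ at $x_1$ inside $A$. The fix is one extra line: in that case $\{X_1>x_1\}\subseteq A$ a.s.\ (since $X_1\leq Q_{X_1}(u_1)\leq x_1$ a.e.\ on $A^c$) and $\P(X_1>x_1)=1-u_1=\P(A)$, so $A=\{X_1>x_1\}$ a.s.\ and $\{X_j>x_j\}\subseteq\{X_1>x_1\}$ gives the claim; alternatively, verify the Fr\'echet upper-bound identity only at continuity points of the marginals and extend by right-continuity. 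With these two repairs your outline is a complete proof.
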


% \subsection{unused cite}
% \red{
% \begin{itemize}
%     \item \cite{AL24} general semi-continuity properties on distortion risk measures, following my paper with Zitikis:
%     \item \cite{R13}, \cite{D94} useful books
% \end{itemize}}

\section{Set-indexed partial comonotonicity}\label{sec:set}

\subsection{Defining $K$-concentration}
The first of the two formulations of partial comonotonicity comes as a generalization of $p$-concentration. 
\begin{definition}
    Given $K\subseteq[0,1]$, we say a random vector $\mathbf{X}$ is \textit{$K$-concentrated},  $K$-con in short, if  $\mathbf{X}$ is $p$-concentrated for each $p\in K$.
\end{definition}
 
 Because any random vector $\mathbf X$ is trivially $1$-concentrated and $0$-concentrated, we sometimes omit $0$ and/or $1$ from the set $K$.  For $K_1, K_2 \subseteq [0,1]$, we sometimes abuse notation and write $K_1=K_2$ to mean $K_1\cap (0,1) = K_2 \cap (0,1)$.  In this paper, the word dependence structure will refer to either $K$-concentration for some $K\subseteq[0,1]$, or a particular copula (\cite{N06}). 

\begin{example}\label{ex:1}  By choosing the set $K$, we can recover many well-studied dependence structures. 
\begin{enumerate}[ (i)]
     \item For $p\in (0,1)$, $\{p\}$-concentration is precisely $p$-concentration. This is the most distinguishing property used to characterize $\ES_p$ in \cite{WZ21}. 
     \item With $K_c = [0,1]$,  $K_c$-concentration is equivalent to  comonotonicity (Fact \ref{fact:pcon}, item (\ref{lem:com})). This justifies the term partial comonotonicity since $K$ is always a subset of $[0,1]$. 
 
    \item For $p\in (0,1)$, let $K_p = [p,1]$. For $\mathbf X$ to be $K_p$-concentrated, 
    there exists a common $p$-tail event of $\mathbf X$ such that the conditional distribution of $\mathbf X$ on this event is comonotonic. This was called upper comonotonicity in \cite{C09}.
    
    \item For $p\in (0,1)$, take any strictly increasing sequence $\{p_n\}_{n\in \mathbb N}$ in $(0,1)$ such that $p_n \uparrow p$. Let $K = \{p_n\} \cup \{p\}$. To the best of the author's knowledge, such dependence structures had not been mentioned in the literature. As we will see in Theorem \ref{thm:quant}, this dependence structure guarantees the additivity of $\VaR_p$.

\end{enumerate}    
\end{example}

We first need a structured approach to selecting the common $p$-tail events for different levels $ p \in K$.
\begin{proposition}

    Let $K\in \mathcal S$ and suppose a random vector $\mathbf X$ is $K$-concentrated. Then there exists a collection of sets $\{A_p\}_{p\in K}$ such that 
    \begin{enumerate}[(i)]
        \item for each $p\in K$, $A_p$ is a common $p$-tail event for $\mathbf X$;
        \item \label{cond:2} for $p,q \in K$ with $p > q$, we have $A_p \subseteq A_q$. 
    \end{enumerate}
\end{proposition}

\begin{proof}
    Let $S = X_1 + \dots + X_n$. By Lemma A.32 of \cite{FS16}, there exists a uniform $[0,1]$ random variable $U$ such that $Q_S(U)=S$ almost surely. For each $p\in K$, let $A_p = \{U > p\}$. Clearly, each $A_p$ is a $p$-tail event for $S$ and the collection satisfies condition (\ref{cond:2}). By item (\ref{lem:sum}) of Fact \ref{fact:pcon}, each $A_p$ is also a common $p$-tail event for $\mathbf X$.
\end{proof}

    We call $\{A_p\}_{p\in K}$ satisfying (i) and (ii) above a nested collection of common tail events. 
The next proposition shows that it suffices to consider concentration indexed by closed subsets of $[0,1]$.

\begin{proposition}

    For $K\subseteq [0,1]$
    with its closure denoted by $\mathsf{cl}(K)$, $K$-concentration is equivalent to $\mathsf{cl}(K)$-concentration. 
\end{proposition}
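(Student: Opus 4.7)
The plan is to prove the nontrivial direction---that $K$-concentration implies $\mathsf{cl}(K)$-concentration---by showing, for a fixed random vector $\mathbf X=(X_1,\dots,X_d)$, that the set $S(\mathbf X)=\{p\in[0,1]:\mathbf X\text{ is }p\text{-concentrated}\}$ is closed. The other direction is immediate since $K\subseteq \mathsf{cl}(K)$. Pick $p_n\in S(\mathbf X)$ with $p_n\to q$ and, by passing to a subsequence, assume $p_n$ is monotone. I will focus on the case $p_n\uparrow q$; the case $p_n\downarrow q$ is symmetric after replacing left quantiles by right quantiles and reversing the role of $>$ and $\geq$ below.

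The key device is to track the thresholds that accompany the given tail events. For each $n$, fix a common $p_n$-tail event $T_n$; the tail property forces the existence of thresholds $a_i^n\in[Q_{X_i}(p_n),Q^+_{X_i}(p_n)]$ (essentially the essential infimum of $X_i$ on $T_n$) with
\[
\bigcup_{i=1}^d\{X_i>a_i^n\}\subseteq T_n\subseteq \bigcap_{i=1}^d\{X_i\geq a_i^n\}\quad\text{a.s.}
\]
Since $p_n<q$, one checks $a_i^n\leq Q^+_{X_i}(p_n)\leq Q_{X_i}(q)$, and left-continuity of $Q_{X_i}$ then forces $a_i^n\to Q_{X_i}(q)=:a_i$. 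Further passing to a subsequence, I may assume $a_i^n\uparrow a_i$ for each of the finitely many $i$.

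Now set $G=\bigcup_i\{X_i>a_i\}$ and $E=\bigcap_i\{X_i\geq a_i\}$, and verify three things. First, $G\subseteq E$: the inclusion $\bigcup_i\{X_i>a_i^n\}\subseteq\bigcap_i\{X_i\geq a_i^n\}$ holds for every $n$, and intersecting over $n$ gives $\bigcap_n\bigcup_i\{X_i>a_i^n\}\subseteq \bigcap_n\bigcap_i\{X_i\geq a_i^n\}=E$; since $a_i^n\leq a_i$ forces $G\subseteq \bigcap_n\bigcup_i\{X_i>a_i^n\}$, this yields $G\subseteq E$. Second, continuity of measure on the decreasing sequences $\bigcup_i\{X_i>a_i^n\}$ and $\bigcap_i\{X_i\geq a_i^n\}$ (each with probability bounded by $1-p_n$ or bounded below by $1-p_n$, respectively) gives $\P(G)\leq 1-q$ and $\P(E)\geq 1-q$. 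Third, since $(\Omega,\mathcal F,\P)$ is atomless and $G\subseteq E$ with $\P(G)\leq 1-q\leq \P(E)$, one can choose an event $A$ with $G\subseteq A\subseteq E$ and $\P(A)=1-q$. Then $\{X_i>a_i\}\subseteq A\subseteq\{X_i\geq a_i\}$ for each $i$, so $A$ is a common $q$-tail event, proving $q\in S(\mathbf X)$.

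The main obstacle is the inclusion $G\subseteq E$, which is a genuine limit exchange; the finiteness of $d$ together with the monotone convergence $a_i^n\uparrow a_i$ (arrangeable after a diagonal subsequence) is what makes the exchange clean. A minor subtlety is confirming that the thresholds $a_i^n$ can be chosen in the stated interval---this is a standard consequence of writing $a_i^n$ as the essential infimum of $X_i$ on $T_n$. Everything else is monotone convergence plus the existence of intermediate events supplied by atomlessness.
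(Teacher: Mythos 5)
Your proof is correct, but it follows a genuinely different route from the paper's. The paper argues directly on the tail events themselves: for $p_n \downarrow p$ it takes the union $A=\bigcup_n A_{p_n}$ of the given common $p_n$-tail events and verifies the tail property pointwise (any $\omega\in A$ lies in some $A_{p_n}$, and $A^c\subseteq A_{p_n}^c$), handling the probability by continuity of measure; the case $p_n\uparrow p$ is treated symmetrically via intersections. You instead encode each tail event $T_n$ by the thresholds $a_i^n$, pass to the limit $a_i=Q_{X_i}(q)$ (resp.\ $Q^+_{X_i}(q)$) using left-/right-continuity of the quantile functions, establish the sandwich $\bigcup_i\{X_i>a_i\}\subseteq\bigcap_i\{X_i\geq a_i\}$, and then use atomlessness to carve out an intermediate event of probability exactly $1-q$. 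Your version is longer and needs two extra ingredients (quantile continuity and atomlessness, the latter available since the paper works on an atomless space), but it is insensitive to how the tail events $T_n$ were chosen and it pins down the exact probability $1-q$ of the limiting event explicitly; the paper's union/intersection argument yields probability exactly $1-p$ only when the $A_{p_n}$ are chosen suitably (e.g.\ nested, which the paper's subsequent proposition on nested collections of common tail events makes possible), a point your construction sidesteps. The steps you flag as delicate — the threshold representation of a tail event via the essential infimum on $T_n$, the limit exchange giving $G\subseteq E$, and the extraction of a common monotone subsequence for the finitely many coordinates — all check out.
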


\begin{proof}
     If $p_n \downarrow p$ with each $p_n \in K$,  let $\{A_p\}_{p\in K}$ be a nested collection of common tail events and $A = \bigcup_{n \in \mathbb N} A_{p_n}$.  Since $\P$ satisfies $\sigma$-additivity, we have $\P(A) = 1-p$. If $\omega \in A$, then $\omega \in A_{p_n}$ for some $n$. Since $A^c \subseteq A_{p_n}^c$, for any $\omega^\prime \in A^c$ we have $\mathbf X(\omega) \geq \mathbf X(\omega^\prime)$, and hence $A$ is a $p$-tail event for $\mathbf X$.  Similarly, if $p_n \uparrow p$ with each $p_n \in K$,  $\mathbf X$ is  $p$-concentrated. 
\end{proof}

Let $\mathcal S$ denote  the set of all closed subsets of $[0,1]$.  Two partial comonotonicities are \textit{equivalent} if either one implies the other. For $K_1, K_2\in \mathcal S$, if $K_1$-concentration is equivalent to $K_2$-concentration, we write $K_1$-con $\cong K_2$-con. If $K_1=K_2$ in $\mathcal S$, clearly $K_1$-con $\cong K_2$-con. The converse will be established in Corollary \ref{cor:1}.

\subsection{Partial comonotonic additivity}
We are ready to develop the notion of partial comonotonic additivity.
\begin{definition}
    Let $K\in \mathcal S$. A risk functional $\rho$ is \textit{$K$-additive} if for all $K$-concentrated random vectors $\mathbf X = (X_1, \dots, X_d) $, we have 
    $$\rho\left(\sum_{i=1}^d X_i\right) = \sum_{i=1}^d \rho( X_i).$$
\end{definition}

We begin by identifying the sets $K\in \mathcal S$ such that the left and right quantile functions are $K$-additive.

\begin{theorem} \label{thm:quant}
    Let $p\in (0,1]$ and $K$ be a closed subset of $[0,1]$. The following are equivalent:
    \begin{enumerate}[(i)]
        \item $\mathrm{VaR}_p$ is $K$-additive.
        \item There exists a strictly increasing sequence $p_n \uparrow p$ with each $p_n \in K$.
   
    \hspace*{-2.7em} Similarly, for $p\in [0,1)$, the following are equivalent:
   
        \item $\mathrm{VaR}^+_p$ is $K$-additive.
        \item There exists a strictly decreasing sequence $p_n \downarrow p$ with each $p_n \in K$.
    \end{enumerate}
\end{theorem}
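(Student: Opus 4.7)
The plan is to prove each biconditional by combining a sandwich-and-limit argument in the easy direction with an explicit $K$-concentrated counterexample in the harder direction. Since (c)$\Leftrightarrow$(d) is the mirror of (a)$\Leftrightarrow$(b) (with right-continuity of $\VaR^+$ and a construction sitting on $(p, p+\epsilon)$), I only sketch (a)$\Leftrightarrow$(b).

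For (b)$\Rightarrow$(a), fix such a sequence $\{p_n\}$ and let $\mathbf{X}$ be $K$-concentrated. Since $p_n\in K$, $\mathbf{X}$ is $p_n$-concentrated, so Fact \ref{fact:pcon}(\ref{lem:ineq}) gives
$$\VaR_{p_n}\Big(\sum_{i=1}^d X_i\Big)\;\leq\;\sum_{i=1}^d \VaR_{p_n}(X_i)\;\leq\;\sum_{i=1}^d \VaR^+_{p_n}(X_i)\;\leq\;\VaR^+_{p_n}\Big(\sum_{i=1}^d X_i\Big).$$
Sending $n\to\infty$, left-continuity of the left quantile carries the leftmost term to $\VaR_p(\sum_i X_i)$. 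The inequalities $\VaR_{p_n}(Z)\leq \VaR^+_{p_n}(Z)\leq \VaR_p(Z)$ (the last valid because $p_n<p$) squeeze the rightmost term to the same limit, so the chain collapses to the identity $\VaR_p(\sum_i X_i)=\sum_i \VaR_p(X_i)$.

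For (a)$\Rightarrow$(b), I argue by contrapositive. If no strictly increasing $K$-sequence reaches $p$, closedness of $K$ yields $\epsilon>0$ with $(p-\epsilon,p)\cap K=\emptyset$, so $K\subseteq[0,p-\epsilon]\cup[p,1]$; WLOG $\epsilon<p$. Take $U\sim\mathrm{Uniform}[0,1]$ and build $(X,Y)$ comonotonic outside the gap but counter-monotonic inside:
$$(X,Y)=\begin{cases}(0,0),& U\leq p-\epsilon,\\ (1+V,\,2-V),& U\in(p-\epsilon,p),~V=\tfrac{U-(p-\epsilon)}{\epsilon},\\ (3,3),& U\geq p.\end{cases}$$
A direct CDF computation gives $\VaR_p(X)=\VaR_p(Y)=2$ and $\VaR_p(X+Y)=3$, contradicting $K$-additivity provided $(X,Y)$ is $K$-concentrated. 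For $r\in[0,p-\epsilon]\cup[p,1]$ one can exhibit a common $r$-tail event: either inside the constant region $\{U\geq p\}$ (where $X=Y=3$) or, for the large-$\P$ case, by taking all of $(p-\epsilon,1]$ together with extra measure drawn from the constant region $\{U\leq p-\epsilon\}$ (where $X=Y=0$). For $r\in(p-\epsilon,p)$, however, $X$ is strictly increasing and $Y$ strictly decreasing in $U$ on the middle interval, so the (essentially unique) top-$(1-r)$ sets of $X$ and of $Y$ sit at opposite ends of $(p-\epsilon,p)$ and cannot coincide, ruling out a common $r$-tail event. Hence $(X,Y)$ is $K$-concentrated yet $\VaR_p$-nonadditive, giving the desired contradiction, with the edge case $p=1$ absorbed into the same construction (the set $\{U\geq p\}$ becomes negligible but ess-sup computations still yield the same gap).

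The main obstacle is this last verification: characterising the tail events of the mixed-type distributions of $X$ and $Y$ carefully enough to show that a common tail event exists precisely for $r\in K$ and is destroyed throughout the gap. The sandwich direction is routine given Fact \ref{fact:pcon}(\ref{lem:ineq}) and the standard one-sided continuity properties of the two quantile functions.
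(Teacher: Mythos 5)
Your proof is correct and follows essentially the same route as the paper: the sufficiency direction is the same sandwich argument via Fact \ref{fact:pcon}(\ref{lem:ineq}) and one-sided quantile continuity (you bound by $\VaR_p$ directly rather than by $\VaR_{p_{n+1}}$, an immaterial variant), and your explicit counterexample---comonotonic outside the gap $(p-\epsilon,p)$, counter-monotonic inside---is exactly the construction the paper invokes via Figure \ref{fig:example} in the proof of Theorem \ref{thm:KCA}, with your verification of $K$-concentration (common tail events drawn from the two constant regions) matching what the paper leaves implicit.
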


\begin{proof}
    Let $p_n \uparrow p$ and $\mathbf X = (X_1, \dots, X_d)$ be $p_n$-concentrated for each $n\in \mathbb N$.  By (\ref{lem:ineq}) from Fact \ref{fact:pcon}, for each $n\in \mathbb N$ we have the inequalities
    \begin{align*} \sum_{i=1}^d \VaR_{p_{n}} (X_i) \leq \sum_{i=1}^d \VaR^+_{p_{n}} (X_i) &\leq \VaR^+_{p_{n}} \left(\sum_{i=1}^d X_i\right)\\
    &\leq \VaR_{p_{n+1}} \left(\sum_{i=1}^d X_i\right ) \leq \sum_{i=1}^d \VaR_{p_{n+1}}(X_i).
    \end{align*}
As $n\to \infty$, by the left-continuity of the left quantile function, the first and last terms converge to $\sum_{i=1}^d {\VaR_p(X_i)}$,  and therefore the last two terms are equal in the limit.\\
Similarly for $p_n \downarrow p$, we have
    \begin{align*} \sum_{i=1}^d \VaR^+_{p_{n}} (X_i) \geq \sum_{i=1}^d \VaR_{p_{n}} (X_i) &\geq \VaR_{p_{n}} \left (\sum_{i=1}^d X_i\right )\\
    &\geq \VaR^+_{p_{n+1}} \left (\sum_{i=1}^d X_i\right ) \geq \sum_{i=1}^d \VaR^+_{p_{n+1}}(X_i).
    \end{align*}
    As $n\to \infty$, by the right-continuity of the right quantile function, the first and last terms converge to $\sum_{i=1}^d {\VaR^+_p(X_i)}$ and the last two terms are equal in the limit.\\
    For the converse directions, see the proof of Theorem \ref{thm:KCA}.
\end{proof}

The following proposition enables us to focus on the special case $d=2$ for many of the proofs that follow.

\begin{proposition}
\label{prop:incr-trans}
Fix $d\in \mathbb N$, $K\in \mathcal S$ and $\mathbf X\in \mathcal X^d$. We have that 
    $\mathbf X  $ is $K$-concentrated if and only if for any $k\in\mathbb N$ and any component-wise  increasing function $\mathbf f = (f_1, \dots, f_k)\footnote{For each $i=1,\dots, k$, $f_i$ is a function from $\R^d$ to $\R$.} : \R^d \to \R^k$, $\mathbf f (\mathbf X)$ is $K$-concentrated.
\end{proposition}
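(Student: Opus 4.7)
My plan is to separate the two directions of the equivalence, with one being essentially trivial and the other being a direct unpacking of definitions.

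The reverse direction is immediate: specializing to $k=d$ and $\mathbf f = \mathrm{id}_{\R^d}$, which is trivially componentwise increasing, the hypothesis gives that $\mathbf X = \mathbf f(\mathbf X)$ is $K$-concentrated.

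For the forward direction, I would fix an arbitrary $p\in K$ and exhibit a common $p$-tail event for $\mathbf f(\mathbf X)$. Starting from a common $p$-tail event $A_p$ for $\mathbf X$ (which exists by $K$-concentration), for each $i\in\{1,\dots,d\}$ there is a null set $N_i$ outside of which $X_i(\omega)\geq X_i(\omega')$ holds whenever $\omega\in A_p$, $\omega'\in A_p^c$. Taking the finite union $N=\bigcup_{i=1}^d N_i$, which is still null, I obtain that for $\omega\in A_p\setminus N$ and $\omega'\in A_p^c\setminus N$, the componentwise inequality $\mathbf X(\omega)\geq \mathbf X(\omega')$ holds. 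Since each $f_j$ is componentwise increasing, applying $\mathbf f$ preserves this componentwise inequality: $f_j(\mathbf X(\omega))\geq f_j(\mathbf X(\omega'))$ for every $j=1,\dots,k$. Hence $A_p$ is a $p$-tail event for each $f_j(\mathbf X)$ and therefore a common $p$-tail event for $\mathbf f(\mathbf X)$. As $p\in K$ was arbitrary, $\mathbf f(\mathbf X)$ is $K$-concentrated.

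There is no significant obstacle in this proof; the argument is essentially a bookkeeping exercise. The only subtlety is handling the almost-sure qualifier in the definition of tail event, which is resolved by taking a finite union of null sets across the $d$ coordinates before applying $\mathbf f$. Notably, the same common tail event $A_p$ transfers from $\mathbf X$ to $\mathbf f(\mathbf X)$, so one does not need to construct new tail events for the transformed vector, and the structure of a nested collection (as in the preceding proposition) is preserved automatically.
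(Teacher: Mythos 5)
Your proof is correct and follows essentially the same route as the paper: transfer the same common $p$-tail event $A_p$ from $\mathbf X$ to $\mathbf f(\mathbf X)$ using that componentwise increasing maps preserve the componentwise inequality, for each $p\in K$. Your explicit treatment of the null sets and of the trivial reverse direction (via the identity map) only adds detail the paper leaves implicit.
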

\begin{proof}
    Take $p \in K$ and let $A_p$ be a common $p$-tail event for $\mathbf X$. For a.s. all $\omega \in A_p$ and  $\omega^\prime \in A_p^c$, we have $\mathbf X(\omega) \geq \mathbf X(\omega^\prime)$. Since $\mathbf f$ is increasing, $[\mathbf f(\mathbf X)](\omega) \geq [\mathbf f(\mathbf X)](\omega^\prime)$ so that $A_p$ is a common $p$-tail event for $\mathbf f(\mathbf X)$, and hence $\mathbf f(\mathbf x)$ is $p$-concentrated. 
\end{proof}

The invariance property under increasing transforms in Proposition \ref{prop:incr-trans}  is also satisfied by the popular dependence concept of positive association, introduced by \cite{EPW67}, who showed that 
if $\mathbf X$ is positively associated, then so is $\mathbf f(\mathbf X)$ for any increasing function $\mathbf f $. Proposition \ref{prop:incr-trans} generalizes
the fact that comonotonicity satisfies this invariance,  
 shown by \cite{LLW23}.

\begin{remark}
    Suppose $\rho$ is additive for $2$-dimensional $K$-concentrated vectors. That is, $\rho(X+Y) = \rho(X)+\rho(Y)$ for any $K$-concentrated random vector $(X,Y)$. Since addition is a pointwise increasing function, for  a $K$-concentrated random vector $(X_1, X_2, X_3)$, $(X_1, X_2+X_3)$ and $(X_2, X_3)$ are both $K$-concentrated. So that
\begin{align*}
    \rho(X_1+X_2+X_3) = \rho(X_1+(X_2+X_3)) &= \rho(X_1)+\rho(X_2+X_3) \\ &= \rho(X_1)+\rho(X_2)+\rho(X_3).
\end{align*}
This shows that considering the case $d=2$ does not lose generality. 
\end{remark} 

We are ready to develop criteria for $K$-additivity for general distortion riskmetrics.
For any  $K\in \mathcal S$, we decompose $  K^c$ as a disjoint union of open intervals $\{B_i\}_{i \in \mathbb N}$.
For a subset $B$ of $[0,1]$, let $1-B=\{1-x:x\in B\}$.

\begin{theorem}\label{thm:KCA} 
   For   $K\in \mathcal S$ and $h\in \mathcal{H}^{\rm bv}$,  a distortion riskmetric $I_h$ is $K$-additive if and only if
   $h$ is linear on each $\mathsf{cl}(1-B_i$).
\end{theorem}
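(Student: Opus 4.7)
I plan to analyze the signed measure $\mathrm{d}h$ on $[0,1]$ relative to the partition $[0,1]=(1-K)\sqcup\bigsqcup_i(1-B_i)$. By the remark after Proposition~\ref{prop:incr-trans} it suffices to treat $d=2$. I will use the representation $I_h(X)=\int_0^1 Q_X(1-p)\,\mathrm{d}h(p)$ from Fact~\ref{fact:choquet}(\ref{choquet:lquantile}), splitting $h$ into left- and right-continuous parts so that Fact~\ref{fact:choquet}(\ref{choquet:rquantile}) handles left-jumps symmetrically.

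For sufficiency, decompose $I_h=\sum_i I_h^{(i)}+I_h^{K}$, where $I_h^{(i)}$ integrates against $\mathrm{d}h|_{1-B_i}$ and $I_h^{K}$ integrates against $\mathrm{d}h|_{1-K}$. On $1-B_i$, linearity of $h$ on $\mathsf{cl}(1-B_i)$ means $\mathrm{d}h|_{1-B_i}$ has constant density $c_i$, so a change of variable gives $I_h^{(i)}(X)=c_i[(1-a_i)\ES_{a_i}(X)-(1-b_i)\ES_{b_i}(X)]$. Since $K$ is closed, $a_i,b_i\in K$, and $(X,Y)$ is both $a_i$- and $b_i$-concentrated, so each summand is additive in $(X,Y)$ by the $\{p\}$-additivity of $\ES_p$ from \cite{WZ21} (consistent with Fact~\ref{fact:pcon}). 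For $I_h^{K}$, the linearity condition at both endpoints of every $1-B_i$ forces $h$ to be one-sided continuous at each $p_0\in\partial(1-K)$ from the side of the adjacent $1-B_i$, so any surviving jump of $h$ at $p_0$ sits on the side facing $1-K$. A case check shows this implies that $1-p_0$ is a strictly ascending (resp.\ descending) limit in $K$ for a right-jump (resp.\ left-jump), so the contribution $\VaR_{1-p_0}$ (resp.\ $\VaR^+_{1-p_0}$) is $K$-additive by Theorem~\ref{thm:quant}. At $p_0\in\text{int}(1-K)$ no constraint is imposed, but $1-p_0\in\text{int}(K)$ is a strict limit from both sides, so any atomic contribution there is also $K$-additive; and the non-atomic part of $\mathrm{d}h|_{1-K}$ is actually supported on $\text{int}(1-K)$, since the boundary $\partial(1-K)$ is countable, and on $\text{int}(1-K)$ every $\VaR_{1-p}$ is $K$-additive.

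For necessity, I argue the contrapositive. If $h$ is not affine on some $\mathsf{cl}(1-B_{i_0})=[\alpha,\beta]$, pick $\alpha\le s_1<s_2<s_3\le\beta$ witnessing the nonlinearity. I construct $(X,Y)$ from an atomless $U\sim\mathrm U[0,1]$, taking $X$ and $Y$ to be increasing step functions of $U$ on $U^{-1}([0,1]\setminus B_{i_0})$ and a swapped pair of step functions on $U^{-1}(B_{i_0})$, arranged so that $X+Y$ is constant on $U^{-1}(B_{i_0})$ while the individual quantile functions of $X$ and $Y$ jump precisely at $s_1,s_2,s_3$. The events $\{U>p\}$ for $p\in K$ remain common $p$-tail events for $(X,Y)$, so $(X,Y)$ is $K$-concentrated. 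The difference $I_h(X+Y)-I_h(X)-I_h(Y)$ then reduces to a nonzero linear combination of $\{h(s_j)\}_{j=1,2,3}$ that vanishes iff $h$ is affine on $[s_1,s_3]$, producing the required failure. Appropriate limiting choices of the $s_j$ also supply the converses of Theorem~\ref{thm:quant} that were deferred from its proof.

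The hard part is the necessity construction: $(X,Y)$ must be $K$-concentrated at every $p\in K$ simultaneously, $X+Y$ must avoid interacting with $h$ on $[\alpha,\beta]$, and $X$ and $Y$ must individually probe $h$ at three well-chosen levels — all while handling boundary cases where $B_{i_0}$ is adjacent to $0$ or $1$ and the left/right quantile convention subtleties when $h$ has one-sided jumps near $\partial(1-B_{i_0})$. On the sufficiency side, the delicate observation is that the linearity condition has exactly the strength needed to forbid atoms of $\mathrm{d}h$ precisely at those points where $\VaR$ would fail to be $K$-additive.
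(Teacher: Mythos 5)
Your overall plan runs parallel to the paper's proof (reduce to $d=2$, use Fact~\ref{fact:choquet} to integrate quantiles against $\mathrm{d}h$, handle the gaps $1-B_i$ by linearity and the rest via Theorem~\ref{thm:quant}; your ES-telescoping on a gap is the same computation as the paper's $m_i\,\E[(X+Y)\mathds{1}_{\Omega_i}]$ step), but one step of your sufficiency argument is false as stated and fails on cases squarely inside the theorem's scope. You claim that the non-atomic part of $\mathrm{d}h|_{1-K}$ is supported on $\mathrm{int}(1-K)$ ``since $\partial(1-K)$ is countable.'' Neither assertion holds for general closed $K$: take $K$ the middle-thirds Cantor set and $h$ the Cantor function, so $h$ is constant (hence linear) on the closure of every $1-B_i$ and the theorem asserts $K$-additivity. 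Then $\mathrm{int}(1-K)=\emptyset$, $\partial(1-K)=1-K$ is uncountable, and the continuous measure $\mathrm{d}h$ lives entirely on $\partial(1-K)$; your argument silently discards exactly the part of $I_h$ that matters. The repair is different from what you wrote: by Theorem~\ref{thm:quant}, $Q^-_{X+Y}(q)=Q^-_X(q)+Q^-_Y(q)$ at every $q\in K$ that is a strict increasing limit of points of $K$, and the exceptional points of $K$ are only the right endpoints of the countably many maximal open intervals of $K^c$ (together with possibly $\min K$), a countable set which any non-atomic measure annihilates. The same observation is needed for atoms of $h$ at points of $\partial(1-K)$ that are not endpoints of any gap (two-sided accumulation points of gaps); your case check via ``the adjacent $1-B_i$'' does not reach these, but they are two-sided strict limits of $K$, so Theorem~\ref{thm:quant} applies on both sides.

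On necessity, your three-point probe inside one gap is a reasonable substitute for the paper's Case 1 (where the paper simply uses that a distortion riskmetric additive under arbitrary dependence on $(a,b)$ must be a multiple of the expectation there). However, the cases where linearity on $\mathsf{cl}(1-B_i)$ fails only through a one-sided jump of $h$ at an endpoint are the delicate ones: the paper resolves them with an explicit pair of crossing step functions just inside the gap (Cases 2--3 and Figure~\ref{fig:example}), and these constructions are also what supply the deferred converse of Theorem~\ref{thm:quant}. You flag this as ``the hard part'' but do not give the construction, so that portion of your proposal is a genuine omission rather than merely a sketchy step.
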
  

\begin{proof}
    Given $h$ with bounded variation, decompose $h = h_l+h_r+h_c$ where $h_l$ (resp. $h_r$) is constant except for jumps that are left- (resp. right-) continuous,  and $h_c$ is continuous. Let $\hat h, \hat h_c, \hat h_l, \hat h_r$ be corresponding conjugate functions. The functional $I_{h_l}$ is a countable weighted sum of left quantiles, and similarly, $I_{h_r}$ is a countable weighted sum of right quantiles. Note that $h$ being continuous on $\mathsf{cl} (1-B_i)$ is equivalent to $\hat h$ being continuous on $\mathsf{cl}(B_i)$.

By Fact \ref{fact:choquet} (\ref{choquet:rquantile}) and (\ref{choquet:lquantile}),  we have
\begin{align*}
    I_h &= I_{h_c}+I_{h_l}+I_{h_r}\\
    &=\int_0^1 Q^-_{(~\cdot~)}(p) \d \hat h_c (p) +\int_0^1 Q^-_{(~\cdot~)} (p)\d \hat h_l (p)+\int_0^1 Q^+_{(~\cdot~)} (p)\d \hat h_r (p).
\end{align*}
Suppose $(X,Y)$ is $K$-concentrated. Let $\{A_p\}_{p\in K}$ be a nested family of common tails, we have 
\begin{align*}
I_{h_c}(X+Y) &= \int_0^1 Q^-_{X+Y}(p) \d \hat h_c(p)\\
&= \sum_{i=1}^\infty \int_{K_i} Q^-_{X+Y}(p) \d \hat h_c(p) + \sum_{i=1}^\infty \int_{B_i} Q^-_{X+Y}(p) \d \hat h_c(p).
\end{align*}
Let $m_i$ be the slope of $\hat h_c$ on $B_i=(a_i,b_i)$, and let $\Omega_i = A_{a_i} \setminus A_{b_i}$, we have:
\begin{align*}
     \int_{B_i} Q^-_{X+Y}(p) \d \hat h_c(p) 
    &= m_i \E[(X+Y) \cdot \mathds{1}_{\Omega_i}]\\ 
    &= m_i \E[X \cdot \mathds{1}_{\Omega_i}] +m_i \E[Y \cdot \mathds{1}_{\Omega_i}]\\
    &= \int_{B_i} Q^-_{X}(p) \d \hat h(p) +\int_{B_i} Q^-_{Y}(p) \d \hat h(p).
\end{align*} 
 For an interior point $p$ of $K$, Theorem \ref{thm:quant} implies that $Q^-_{X+Y}(p) = Q^-_X(p)+Q^-_Y(p)$. Combining the above arguments,  $I_{h_c}$ is $K$-additive.

Next suppose for some $p\in [0,1]$, $\VaR_p^-$ is not $K$-additive. By Theorem \ref{thm:quant}, $p \in (a_i, b_i]$ for some $B_i = (a_i, b_i)$. Since $\hat h$ is continuous on $[a_i,b_i]$, $\hat h_l$ is continuous at $p$ and $I_{h_l}$ does not have a mass on $\VaR_p^-$, this shows that $I_{h_l}$ is $K$-additive. A similar argument shows that $I_{h_r}$ is also $K$-additive.

For the only if direction, we construct counterexamples taking $\Omega = [0,1]$.

Case 1: If $\hat h$ is not linear on some $B_i = (a,b)$. Pick $X,Y$ such that $X=Y = 0$ on $[0,a]$ and $X = Y = 3$ on $[b,1]$. Since $I_h$ is not a multiple of the expectation on $(a,b)$, we can pick $X, Y$ to have range $[1,2]$ on $(a,b)$ so that $I_h$ restricted to $B_i$ is not additive. This can be done because $K$-concentration puts no restriction on the dependence structure of ($X|_{(a,b)}, Y|_{(a,b)})$.  Since $X=Y$ outside $(a,b)$, $I_h$ is additive when restricted to $(a,b)^c$. Overall, $I_h$ is not additive for $(X,Y)$. 

Case 2: Suppose for some $B_i = [a_i, b_i]$,  $\hat h$ is not right-continuous at $a$. Then $I_h$ puts a mass on $Q^-_a$ and for some $\epsilon \in (0,a)$,  $(a-\epsilon, a) \cap K = \emptyset$. A counterexample is illustrated by Figure \ref{fig:example}. Let $U$ be a uniform $[0,1]$ random variable, $X = f(U)$ and $Y=g(U)$ with $f,g$ being the two functions in the illustration. 

Case 3: A similar counterexample can be constructed where additivity fails due to a mass on some $Q^+_{b_i}$.  
\end{proof}
\begin{figure}
    \centering
    \begin{center}
    \begin{tikzpicture}
\draw[<->] (0,3.4) -- (0,0) -- (3.4,0);
\draw[gray,dotted] (0,0.6) -- (3,0.6);
\draw[gray,dotted] (0,1.2) -- (3,1.2);
\draw[gray,dotted] (0,1.8) -- (3,1.8);
\draw[gray,dotted] (0,2.4) -- (3,2.4);
\draw[gray,dotted] (0.6,0) -- (0.6,3);
\draw[gray,dotted] (1.2,0) -- (1.2,3);
\draw[gray,dotted] (1.8,0) -- (1.8,3);
\draw[gray,dotted] (2.4,0) -- (2.4,3);
\draw[gray,dotted] (3,0) -- (3,3);
\draw[gray,dotted] (0,3) -- (3,3); 

\node[below] at (3,0) {$1$}; 
\node[below] at (0,0) {$0$}; 
\node[below] at (1,0) {$a-\epsilon$}; 
\node[below] at (2,0) {$a$}; 

\draw[blue] (0,1) -- (1, 1);
\draw[red] (0,0.9) -- (1, 0.9);

\draw[blue] (2,2) -- (3, 2);
\draw[red] (2,2.1) -- (3, 2.1);

\draw[blue] (1,1) -- (2, 2);
\draw[red] (1,2) -- (2, 1);
    \end{tikzpicture} 
\end{center}   
    \caption{An example where $Q_a^-$ and $\rho$ are not additive}
    \label{fig:example}
\end{figure}
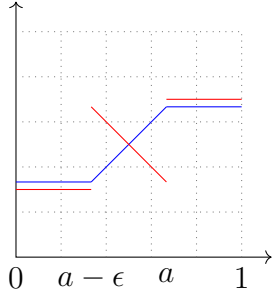

\begin{example}We exhibit some explicit counter-examples for each of the three cases from the above proof. Fix $\Omega = [0,1]$. Take $X$ and $Y$ to be $0$ on $[0, 2/3]$ and $3$ on $[5/6,1]$. On $(2/3, 5/6)$, let $X(t) = 9t-4.5$ and $Y(t) = -9t+7.5$. Note that $X \laweq Y$.

Case 1: Let $K = [0,1/4] \cup [3/4, 1]$ and $g(t) = X(t) = (9t-4.5)_+ \wedge 3$ for $t\in [0,1]$. That is, $g$ is $0$ on $[0,2/3]$, increases linearly on $[2/3, 5/6]$ and is $3$ on $[5/6, 1]$,  so that the corresponding $\hat h$ for $\rho_g$ is not linear on $[2/3, 5/6]$. The complement $K^c$ has one connected component $(3/8, 5/8)$, on which $\hat h$ is not linear.  We have 
$$(X+Y)(t) = 3 \cdot\mathds{1}\{(2/3, 5/6)\} + 6 \cdot \mathds{1}\{[5/6,1]\}.$$
and we compute that
\begin{align*}
    \rho_g(X) &= 0 + \int_{2/3}^{5/6} (9t-4.5)^2 \d t + \int_{5/6}^1 3^2 \d t = 2.375\\
    \rho_g(X+Y) &= 0 + \int_{2/3}^{5/6} 3\cdot (9t-4.5) \d t + \int_{5/6}^1 6\cdot 3 \d t = 4.125.
\end{align*}
    Therefore, $\rho_g(X+Y) < 2\rho_g(X) = \rho_g(X)+\rho_g(Y)$.

    Case 2: Let $\rho = \VaR^-_{5/6}$. We have that $\rho(X) = \rho(Y) = \rho(X+Y) = 3$, so that $\rho_g(X+Y) < \rho_g(X) + \rho_g(Y)$.

    Case 3: Let $\rho = \VaR^+_{2/3}$. We have that $\rho(X) = \rho(Y) = 0$ and $\rho(X+Y) = 3$. Again, $\rho$ is not additive on $(X,Y)$.
\end{example}

\begin{remark}
    The following relationships are obvious from the definitions. Let $K_1 \subseteq K_2 $ be closed subsets of  $[0,1]$: 
\begin{enumerate}[(i)]
    \item for any random vector, $K_2$-concentration implies $K_1$-concentration;
    \item for any distortion riskmetric,  $K_1$-additivity implies  $K_2$-additivity.
\end{enumerate}
\end{remark}

\section{Function-indexed partial comonotonicity}\label{sec:func}
\subsection{Defining $g$-comonotonicity}
For random variables $X,Y,Z$, if $(X,Z)$ and $(Y,Z)$ are comonotonic, it is in general not true that $(X,Y)$ is also comonotonic. That is, the property of having perfect comovement is not transitive.  However, depending on $Z$, the property is partially transitive.

In this section, we define and investigate partial comonotonicity indexed by functions, and establish an equivalence to set-indexed comonotonicity. 

Let $\mathcal G$ be the collection of  increasing functions from $[0,1]$ to $\R$ that are right-continuous at $0$ and left-continuous on $(0,1]$ Note that for any increasing function $f: [0,1]\to \R$, there is a unique function  $g\in \mathcal G$ such that $f=g$ almost everywhere, given by $g(0) = f(0+)$ and $g(p) = f(p-)$ for all $p\in (0,1]$. We write $g = \mathsf{LC}(f)$.   
\begin{definition}
Let $g \in \mathcal G$. A random vector $(X_1, \dots, X_d)$ is \textit{$g$-comonotonic} ,  or $g$-com in short,   if there exists a random variable $Z$ such that
\begin{enumerate}[(i)]
    \item the left quantile function $Q_Z$ is equal to $g$, and
    \item $(X_i,Z)$ is comonotonic for all $i=1, \dots, d$.
\end{enumerate}
\end{definition}

\begin{definition}
    Given $g \in \mathcal G$, a risk functional $\rho$ is
 \textit{$g$-additive} if for all random vector $(X_1, \dots, X_d)$ satisfying $g$-comonotonicity, we have $\rho(X_1 + \cdots +X_d) = \rho(X_1)+ \cdots + \rho(X_d)$.

\end{definition}

\subsection{Relative strengths of function-indexed partial comonotonicity}

%\begin{definition}\label{f-o}

We investigate the relative strengths of partial comonotonicity indexed by two functions $f,g\in \mathcal G$.  Given $g \in \mathcal G$, we say $p\in [0,1)$ is a \textit{point of strict increase on the right} (PSIR) if $g(p)<g(s)$ for all $s\in (p,1]$, \textit{point of strict increase on the left} (PSIL) is defined similarly. A point $p\in [0,1]$ is a \textit{point of strict increase} (PSI) if it is either a PSIR or a PSIL (or both). We write $\mathsf{PSI}(g)$ for the set of PSIs of $g$.
\begin{lemma} \label{lem:order}

    For $f,g \in \mathcal G$, the following are equivalent:  
    \begin{enumerate}[(a)]
    \item there exists an increasing function $h$ such that $f = h \circ g$ almost everywhere; 
    \item there exists an increasing function $h$ such that $f = \mathsf{LC}(h \circ g)$; 
    \item\label{L1c} for all $p,q \in [0,1]$ with $p<q$ such that $g(p)=q(p)$. Then either $f(p)=f(q)$ or \begin{equation}\label{eq:psil}
        p \text{~is a PSIL for~} g \text{~and~}g\text{~is~constant~on~}(p,q].
    \end{equation}
    \item There is a subset $A$ of $[0,1]$ with Lebesgue measure $1$ such that for all $p,q \in A$, $$g(p)=g(q) \implies f(p)=f(q).$$
    \end{enumerate}
    
\end{lemma}

\begin{proof}
     
(a)$\Rightarrow$(b): Let $A\subseteq [0,1]$ be the set with Lesbesgue measure $1$ such that $f=h\circ g$ on $A$. Since $A^c$ has Lebesgue measure $0$, it does not contain any open intervals. For any $p\in (0,1]$ and $p > \epsilon_n > 0$, there exists some $p_n\in (p-\epsilon_n, p] \cap A$. Taking $\epsilon_n \to 0$, we have a sequence $p_n \uparrow p$ with each $p_n \in A$. Hence 
$$f(p) = f(p-) = \lim_{n\to \infty} f(p_n) = \lim_{n\to \infty} h\circ g(p_n) = h\circ g(p-) = \mathsf{LC}(h\circ g)(p).$$
Similarly, for $p=0$, there is a sequence $p_n\downarrow p$ with each $p_n \in A$, so that $f(0) = f(0+) = (h\circ g)(0+) = \mathsf{LC}(h\circ g)$.

(b)$\Rightarrow$(c): 
Let $p<q$ be such that $g(p)= g(q)$. Suppose that Condition \eqref{eq:psil} fails. Clearly $g$ is constant on $[p,q]$, so that $p$ is not a PSIL for $g$. Then $g$ is constant on some interval $[p-\epsilon, q]$. Hence $h\circ g$ is left-continuous at $p$ and $q$, so that $f(p) = f(q)$.

(c)$\Rightarrow$(d): 
The set $B$ of $p$ such that there exists some $q>p $ satisfying Condition $\eqref{eq:psil}$ is at most countable, and has Lebesgue measure $0$. This is because the collection of intervals $(p,q)$ are nonempty and disjoint. Let $A$ be the complement of $B$, we have that (d) holds.

(d)$\Rightarrow$(a):
For each $p\in A$, set $h(g(p)) = f(p)$, so that $h$ is defined on $B$, the range of $g$. This is well-defined because if $g(p) = g(q)$, then $h(g(p)) = f(p) = f(q) = h(g(q))$. If $a<b$ in $B$, there exists $p<q$ such that $g(p)=a$ and $g(q) = b$, hence $h(a) = f(p) < f(q) = h(b)$ so that $h$ is increasing. 
% For $p\notin B$, extend $h$ to be have domain $\R$ by 
% $$h(p) = \begin{cases}
%     \sup \{h(q): q \in B \cap (-\infty, p)\} & \mbox{~if}~ B \cap (-\infty, p)\neq \emptyset\\
%     \inf \{h(q): q\in B \cap (p,\infty)\} &\mbox{~if~} B \cap (-\infty, p) = \emptyset.
% \end{cases}$$ 
We have $g = h\circ f$ on $A$, proving (a). 
\end{proof}
For $f,g\in \mathcal G$, we say $f \precsim g$ if any of the above equivalent conditions hold.
\begin{lemma}
      \label{lem:lc}
    Let $h$ be an increasing function and $X\in \mathcal X$. Then we have
    $$Q_{h(X)}(p) = h\left (Q_X(p)\right) ~\mbox{almost everywhere}.$$
    In particular, 
    $$Q_{h(X)}(p) = \mathsf{LC}\left(h\left (Q_X(p)\right)\right) ~\mbox{for~all~}p\in(0,1].$$
\end{lemma}
 
\begin{proof}
    The first statement is Lemma A.27 from \cite{FS16}. Since $Q_{h(X)}(p)$ is left-continuous, it is fixed by $\mathsf{LC}$. Taking $\mathsf{LC}$ on both sides yields the second statement. 
\end{proof}

\begin{proposition}\label{prop:order}
    Given $f,g \in \mathcal G$, we have $f \precsim g$ if and only if  $g$-comonotonicity implies $f$-comonotonicity.
\end{proposition}

\begin{proof}
    ($\Rightarrow$): Suppose $(X_1, \dots, X_d)$ is $g$-comonotonic, let $Z$ be such that $Q_Z^- = g$ and $X_i$ comonotonic to $g$ for each $i = 1, \dots, d$. Let $h$ be such that  $f = \mathsf{LC}(h \circ g)$.
    
    We show that $h(Z)$ satisfies the definition of $f$-comonotonicity for the random vector $(X_1, \dots, X_d)$. By Lemma \ref{lem:lc},  we have $Q_{h(Z)} = \mathsf{LC}(h\circ g)$. For each $i=1, \dots, d$, $(X_i, Z)$ being comonotonic clearly implies the comonotonicity of $(X_i, h(Z))$.

    ($\Leftarrow$): Suppose $f \precnsim g$. We use formulation \eqref{L1c} of Lemma \ref{lem:order}, so that there exists $p<q$ with $g(p)=g(q)$ and $f(p)\neq f(q)$, and $p$ is not a PSIL for $g$. Hence there exists $p'<p$ such that $g$ is constant on $[p',q]$. Let $q' = \sup \{ t: f(t) = f(p)\}$. By left-continuity of $f$, $q' < q$. We have that $f \leq f(p)$ on the non-degenerate interval $[p', p]$ and $f>f(p)$ on the non-degenerate interval $[q', q]$.
    Define the function $h:[0,1] \to [0,1]$ as follows:
    $$h(t) = \begin{cases}
        t & \mbox{~if~}t < p'\\
        q-(t-p') & \mbox{~if~} t\in [p', q]\\
        t & \mbox{~if~}t > q
    \end{cases}.$$
    Let $U$ be a uniform $[0,1]$ random variable. 
    
    We claim that $X = f(U)$ and $Y = (f\circ h)(U)$ are $g$-comonotonic but not $f$-comonotonic. Let $Z = g(U)$. Since $X, Z$ are increasing functions of $U$, they are comonotonic. Since $g$ is constant on $[p', q]$, we also have that $Z = (g\circ h)(U)$, so that $Y,Z$ are increasing functions of $h(U)$. Therefore $Z$ witnesses the $g$-comonotonicity of $(X,Y)$. 

    To see that $X,Y$ is not $f$-comonotonic, let $Z$ be any random variable with quantile function $f$. Any $q'$ tail event of $X$ must equal to the set $\Omega \supseteq B = \{ U > q'\}$ with probability $1$. Comonotonicity of $(X,Z)$ implies that $Z$ also have $B$ as its $q'$-tail event. However by construction, $Y$ cannot have $B$ as its $q'$-tail event, so that $Y,Z$ cannot be comonotonic, and hence $(X,Y)$ is not $f$-comonotonic.
\end{proof}

\begin{proposition}\label{prop:5}
    Given $f,g \in \mathcal G$, we have $f \precsim g$ if and only if $\mathsf{PSI}(f) \subseteq \mathsf{PSI}(g)$. 
\end{proposition}
\begin{proof}
    ($\Rightarrow$): Suppose $p\in \mathsf{PSI}(f)$. If $p$ is a PSIL for $f$, we show that $p$ is a PSIL for $g$. Suppose not, there exists some $q<p$ such that $g$ is constant on $[q,p]$ and hence $h\circ g$ is constant on $[q,p]$, so that $f = \mathsf{LC}(h\circ g)$ is constant on $(q,p]$, contradiction.

    Next, consider the case $p$ is a PSIR for $f$. If $p$ is a PSIR for $g$, $p\in \mathsf{PSI}(g)$. Otherwise, there exists some $q>p$ such that $g(p) = g(q)$. By Condition \eqref{L1c} of Lemma \ref{lem:order}, since $f(p)<f(q)$, we must have that $p$ is a PSIL for $g$. 

    ($\Leftarrow$): Suppose $f \precnsim g$. As constructed in the proof of Proposition \ref{prop:order}, part   ($\Leftarrow$): $q'$ is an element of $\mathsf{PSI}(f)$ but not an element of $\mathsf{PSI}(g)$.
\end{proof}
 If $f$-comonotonicity is equivalent to $g$-comonotonicity, we similarly write $f$-com $\cong g$-com.
\begin{remark}\label{rmk:3}
    The symmetric part of the pre-order $\precsim$ defines an equivalence relation $\cong$ on $\mathcal G$. By Proposition \ref{prop:order} and \ref{prop:5}, we have $$ f\mbox{-com} \cong g\mbox{-com}\iff f\cong g \iff \mathsf{PSI}(f) = \mathsf{PSI}(g).$$
\end{remark}

 \subsection{Equivalence to set-indexed partial comonotonicity}
 To establish a connection between the two notions of partial comonotonicity, we define a pair of functions between $\mathcal S$ and $\mathcal G$.
We define $\mathcal I: \mathcal G \to \mathcal S$ by 
\begin{equation*}
    \mathcal I(g) = \mathsf{PSI}(g)
\end{equation*} 
and $\mathcal V: \mathcal S \to \mathcal G$, $K \mapsto \mathcal V(K) = g$ by 
\begin{equation}\label{eq:g}
    g(p) = \sup \{ (0,p)\cap K\} \mathrm{~for~}p\in [0,1], \mathrm{~with~}\sup \emptyset := 0.
\end{equation}
We check that the maps $\mathcal I$ and $\mathcal V$ have the correct codomains, starting with $\mathcal I$.  
  \begin{lemma}
      Given $g\in \mathcal G$, $\mathcal I(g)$ is closed, so that $\mathcal I(g)\in \mathcal S$.
  \end{lemma}
  \begin{proof}
      Suppose $p_n\downarrow p$ with each $p_n \in \mathcal I(g)$. By taking a subsequence if necessary, we may assume without loss of generality that either every $p_n$ is a  PSIR or every $p_n$ is a PSIL.

      Suppose each $p_n$ is a PSIR. Given any $q>p$, there is some $p_n \in (p,q)$. We have
      $g(p)\leq g(p_n)< g(q)$ so that $p$ is also a PSIR.

      Suppose each $p_n$ is a PSIL. Given any $q>p$, there is some $p_n\in (p,q)$. We have 
      $g(p) <g(p_n) \leq p(q)$ and hence $p$ is a PSIR.

      A similar argument shows that if $p_n\uparrow p$ with each $p_n \in \mathcal I(g)$, $p\in \mathcal I(g)$.
  \end{proof}

 Given $K\in \mathcal S$, to see that $\mathcal V(K) = g$ is indeed in $\mathcal G$, if $0 \leq p <q\leq 1$, $g(q)$ is the supremum over a larger set and hence $g(p)\leq g(q)$, so that $g$ is increasing. The left-continuity is of $g$ is more involved, we start with a lemma.
\begin{lemma}\label{lem:1}
    Let $K\in \mathcal S$ and $p\in (0,1]$. Suppose there exists a strictly increasing sequence $\{p_n\}_{n\in \mathbb N}$ in $K$ such that $p_n \uparrow p$. Then the function $g$ defined by equation \eqref{eq:g} satisfies $g(p) = p$.
\end{lemma}
\begin{proof}
    Since $p_n \in (0,p_{n+1}\cap K)$, we have that $g(p_{n+1})\geq p_n$. By the increasingness of $g$, we have $g(p) \geq g(p_{n+1})$ for all $n$. Therefore 
    $$g(p) \geq \lim_{n\to \infty} g(p_{n+1}) = \lim_{n \to \infty} p_n = p.$$
    But we also have $g(p)\leq \sup\{(0,p)\}=p$, so that $g(p)=p$. 
\end{proof}
We are ready to prove the left-continuity of $\mathcal V(K)$.
\begin{lemma}
    Given $K\in \mathcal S$, the function $g=\mathcal V(K)$ defined by equation \eqref{eq:g} is left-continuous.
\end{lemma}
\begin{proof}
    Let $p\in (0,1]$ be arbitrary, we show that $g$ is left-continuous at $p$. \\
    Case 1: suppose there exists $\epsilon > 0$ such that $[p-\epsilon, p)\cap K = \emptyset$. Hence $(0,q)\cap K = (0,p-\epsilon)\cap K$ for all $q\in [p-\epsilon, p]$. This shows that $g$ is constant on $[p-\epsilon, p]$ and is hence left-continuous at $p$.\\
    Case 2: if no such epsilon exists, there exists a strictly increasing sequence $\{p_n\}_{n\in \mathbb N}$ in $K$ such that $p_n \uparrow p$. By Lemma \ref{lem:1}, $g(p) = p$. For any sequence $q_n \uparrow p$, for each $n$, exists some $k(n)$ such that $p_n < q_{k(n)}$, so that 
    $\lim_{k\to \infty}g(q_k) \geq g(p_n)$ for any $n\in \mathbb N$, and hence $\lim_{k\to \infty}g(q_k) \geq g(p)$. But each $g(q_k) \leq g(p)$, we have that $g$ is left-continuous at $p$.
\end{proof}

\begin{proposition}\label{prop:adj} 
 
The maps $\mathcal V$ and $\mathcal I$ have the following properties:
    \begin{enumerate}[(a) ]
        \item\label{6a} $\mathcal I (\mathcal V(K)) = K$ for all $K\in \mathcal S$.
        \item\label{6b} $\mathcal V (\mathcal I(g)) \cong g$ for all $g \in \mathcal G$; 
    \end{enumerate}
\end{proposition}

\begin{proof}
    \begin{enumerate}[(a)]
        \item Let $g = \mathcal V(K)$. We first show that $K \subseteq \mathcal I(g)$. Fix $p\in K$, since $K$ is closed, either there is a sequence $p_n\uparrow p$ or there is a sequence $p_n\downarrow p$. 

        In the first case,  by Lemma \ref{lem:1}, $g(p) = p > q \geq g(q)$
        so that $p$ is a PSIL of $g$.
        In the second case, for any $q>p$ there exists some $p_n \in (p,q)$ so that $g(p) \leq p$ and $g(q) \geq p_n$, so that $p$ is a PSIR of $g$. 

        Next, we show $\mathcal I(g)\subseteq K$. Suppose $p\notin K$ with $p\in (0,1)$. Since $K$ is closed, there exists $q,r$ with $0<q<p<r<1$ such that $(q,r)\cap K = \emptyset$. Then $g$ is constant on $(q,r)$ and $p\notin \mathcal I(g)$. 
        \item Apply (a) to $\mathcal I(g)$, we have
        $$\mathcal I( \mathcal V( \mathcal I (g))) = \mathcal I(g).$$
        By Remark \ref{rmk:3}, we have $\mathcal V(I(g)) \cong g$. 
        \qedhere
    \end{enumerate}
\end{proof}

We need one more technical lemma before stating and proving the equivalence of set-indexed and function indexed partial comonotonicity. 

\begin{lemma}\label{lem:Z}
    Let $K \in \mathcal S$ and for each $q\in K$, $A_q\subseteq \Omega$ has probability $(1-q)$ with $p,q\in K, p<q$ implies $A_p \supseteq A_q$. The random variable $Z$ given by 
    $$Z(\omega) = \sup \{ q\in K: \omega \in A_q\}.$$
    has the following properties.
    \begin{enumerate}[(a)]
        \item For $q\in K$, we have
        $$\P(Z<q) \leq q \leq \P(Z\leq q).$$
        \item If  for some $q\in K$, $(q-\epsilon, q)\cap K = \emptyset$ for some $\epsilon > 0$, then we have $$\P(Z<q)= q.$$ 
        \item If $(p-\epsilon, p) \cap K = \emptyset$ for some $\epsilon  > 0$, let $r = \sup((0,p)\cap K)$, we have
        $$\P(Z \leq r) \geq p\mbox{~and~} \P(Z < r) < p.$$
    \end{enumerate}
\begin{proof}
    \begin{enumerate}[(a)]
        \item \label{a} If $\omega \in A_q$, then $Z(\omega) \geq q$, so that $\{Z < q\} \subseteq A_q^c$, proving the first inequality.
        If $\omega \notin A_q$, then $Z(w) \leq q$, so that $A_q^c \subseteq \{Z \leq q\}$, proving the second inequality.
        \item \label{b} If $\omega \notin A_q$, then $Z(w)<q$ since there is no increasing sequence in $K$ converging to $q$. Combining with part \eqref{a} we have 
        $\P(Z<q) = q$. 
        \item \label{c}
        Let $q = \inf ((p,1)\cap K)$. The set $K$ has empty intersection with $(r,q)$ and $Z$ cannot take any values in $(r,q)$, so that by \eqref{a} and \eqref{b}, we have
        $$\P(Z\leq r) = \P(Z<q) = q \geq p$$
        and 
        $$\P(Z<r) \leq r < p.$$
    \end{enumerate}
\end{proof}

\end{lemma}

\begin{theorem} \label{thm:3} For $K\in \mathcal S$ and $g \in \mathcal G$, we have the following equivalences:
    \begin{enumerate}[(i)]
        \item \label{item:KtoVK}$K$-concentration is equivalent to  $\mathcal V(K)$-comonotonicity;
        \item \label{item:gtoIg}$g$-comonotonicity is equivalent to $\mathcal I(g)$-concentration.
    \end{enumerate}
\end{theorem}

\begin{proof}
(\ref{item:KtoVK}) ($\Rightarrow$): Let $g = \mathcal V(K)$ and suppose $\mathbf X$ is $K$-concentratated. Let $\{A_q\}_{q\in K}$ be a nested collection of common tail events.  Define the random variable $Z$ to be
$$Z(\omega) = \sup \{ q\in K: \omega \in A_q\}.$$
We first show that $Q_Z = g$. Fix $p\in [0,1]$. If $p=0$, $g(0)=0$ and $Q_Z(0) = \VaR_0^+(Z) = 0$. 

Suppose $p\in K \setminus \{0\}$, we wish to show that $\VaR_p(Z) = g(p)$. It suffices to show that the set 
$$S_p = \{x\in \R: \P(Z\leq x) \geq p\}$$ has infimum $g(p)$.

If $p\in K$ and there is a strictly increasing sequence $\{p_n\}$ in $K$ with $p_n \uparrow p$, we have $g(p) = p$. By Lemma \ref{lem:Z} part \eqref{a}, $p\in S_p$. For any $q<p$, there is some $p_n\in (q,p)$. We have
$$\P(Z\leq q) < \P(Z< p_n) \leq p_n < p $$
so that $q\notin S_p$. This shows that $p = \inf S_p$. 

In all other cases,  Lemma \ref{lem:Z} part \eqref{c} shows that $\inf S_p = g(p)$.

Next, we show that each $X_i$ is comonotonic with $Z$. We verify comonotonicity using the formulation given in \cite{S86}, that is, $$(Z(\omega_1)-Z(\omega_2))(X_i(\omega_1)-X_i(\omega_2)) \geq 0 \mbox{~almost~surely.}$$ Fix $i$, and take $\omega_1, \omega_2 \in \Omega$. Without loss of generality, assume $p_1 = Z(\omega_1) < Z(\omega_2)= p_2$. If there exists $q\in (p_1, p_2) \cap K$, we have that $p_1 \in A_q^c$ and $p_2 \in A_q$, so that $X_i(\omega_1) \leq X_i(\omega_2)$. 
Otherwise, $(p_1, p_2)\cap K = \emptyset$. 
In this case, $\omega_1 \notin A_{p_2}$ and $\omega_2 \in A_{p_2}$, we again have  $X_i(\omega_1) \leq X_i(\omega_2)$.

(\ref{item:KtoVK})($\Leftarrow$): Let $\mathcal V(K) = g$ and suppose $\mathbf X=(X_1, \dots, X_d)$ is $g$-comonotonic. Let $Z$ be the random variable comonotonic with each $X_i$ with $Q_Z=g$. For each $q\in K$, define $A_q = \{\omega: Z(\omega) > q\}$. By the construction of $\mathcal V(K)$, $\P(A_q) = 1-q$ and for any $\omega_1 \in A_q, \omega_2 \in A_q^c$, $Z(\omega_1) > Z(\omega_2)$. Comonotonicity  of $(Z, X_i)$ implies that  $X_i(\omega_1) \geq X_i(\omega_2)$ , so that $A_q$ is a $q$-tail event for $X_i$. This holds for all $i=1, \dots, d$ and $q\in K$, so that $\mathbf X$ is $K$-concentrated.

(\ref{item:gtoIg}):  
By Proposition \ref{prop:adj} \eqref{6b}, $g \cong \mathcal V(\mathcal I(g))$. We have 
\begin{align*}
    g\mbox{-com} &\cong \mathcal V(\mathcal I(g))\mbox{-com} &  \text{[Remark \ref{rmk:3}]}\\
    &\cong \mathcal I(g)\mbox{-con} & \text{Part (\ref{item:KtoVK})}
\end{align*}
\end{proof}
\begin{corollary}\label{cor:1}
   Let $K_1, K_2\in \mathcal S$. Then 
   $$K_1 = K_2 \mbox{~if~and~only~if~}K_1\mbox{-con}\cong K_2\mbox{-con}.$$
\end{corollary}
\begin{proof} Let $g_1 = \mathcal V(K_1)$ and $g_2 = \mathcal V(K_2)$. We have 
\begin{align*}
    K_1 = K_2 &\iff \mathcal I(\mathcal V(K_1))= \mathcal I(\mathcal V(K_2)) & \text{Proposition \ref{prop:adj} \eqref{6a}}\\
    &\iff \mathcal I(g_1) = \mathcal I(g_2)\\
    &\iff g_1 \cong g_2 & \text{Remark \ref{rmk:3}}\\
    &\iff g_1\mbox{-com}\cong g_2\mbox{-com} & \text{Remark \ref{rmk:3}}\\
    &\iff K_1\mbox{-con}\cong K_2\mbox{-con}, & \text{Theorem \ref{thm:3}}
\end{align*} 
proving the desired equivalence.
\end{proof}

\begin{remark}
    Let $K\in \mathcal S$. Using the terminology of copulas, $K$-concentration represents a collection of copulas $\mathcal C_K$, in the sense that a random vector $\mathbf X$ is $K$-concentrated if and only if the copula of $\mathbf X$ is a member of $\mathcal C_K$. The best tool to describe $\mathcal C_K$ is ordinal sums from \citet[Section 3.2.2]{N06}. Let the open intervals $\{B_i:=(a_i,b_i)\}$ be a disjoint partition of $K^c$ and closed intervals $\mathcal K=\{K_i := (c_i, d_i)\}$ be a disjoint partition of $K$. Replace each $B_i$ by its closure $[a_i,b_i]$. 
    Let $\{J_i\}$ be the collection of non-degenerate closed intervals among all $\overline{B_i}$ and $K_i$. Any collection of copulas $\{C_i\}$ with the same index of $\{J_i\}$ defines a copula through the ordinal sum. Such an ordinal sum copula is a member of $\mathcal C_K$ if the following holds:
    \begin{align*}
        &\mbox{each $C_i$ is the comonotonic copula when   $J_i \in \mathcal K$.} 
    \end{align*}
Conversely, all members of $\mathcal C_K$ is of this form. Figure \ref{fig:ordinal} illustrates this idea. Take $K = [0,1/3]\cup [2/3,1]$. Put any copula inside the middle box, the resulting ordinal sum copula is in $\mathcal C_K$.
\begin{figure}
    \centering
        \begin{center}
    \begin{tikzpicture}
\draw[<->] (0,3.4) -- (0,0) -- (3.4,0);
\draw[gray,dotted] (0,0.6) -- (3,0.6);
\draw[gray,dotted] (0,1.2) -- (3,1.2);
\draw[gray,dotted] (0,1.8) -- (3,1.8);
\draw[gray,dotted] (0,2.4) -- (3,2.4);
\draw[gray,dotted] (0.6,0) -- (0.6,3);
\draw[gray,dotted] (1.2,0) -- (1.2,3);
\draw[gray,dotted] (1.8,0) -- (1.8,3);
\draw[gray,dotted] (2.4,0) -- (2.4,3);
\draw[gray,dotted] (3,0) -- (3,3);
\draw[gray,dotted] (0,3) -- (3,3); 

\node[below] at (3,0) {$1$}; 
\node[below] at (0,0) {$0$}; 
\node[red, below] at (0.5,0) {$K$};
\node[red, below] at (1.5,0) {$K^c$};
\node[red, below] at (2.5,0) {$K$};
\draw[red, dotted] (1,0)--(1,1);
\draw[red, dotted] (2,0)--(2,1);

\draw[blue] (0,0) -- (1,1); 
\draw[blue] (2,2) -- (3,3);

\draw[blue] (1,1) -- (2,1) -- (2,2) -- (1,2) -- cycle;

    \end{tikzpicture}
\end{center}  
    \caption{$K$-concentration as a collection of Ordinal sums}
    \label{fig:ordinal}
\end{figure}
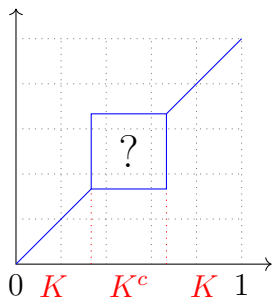

\end{remark}

\section{On spectral risk measures}\label{sec:spectral}

So far, our study has not imposed monotonicity on the risk functionals $I_h$ for $h\in \mathcal H^{\rm bv}$.  Next, we consider a special class of $I_h$ that is coherent, \footnote{We also assume that $h$ is continuous at $0$. For $I_h$ coherent, $h$ is an increasing concave function  with $h(0)=0$ and $h(1)=1$, the only possible discontinuity is at $0$. A jump corresponds to adding a multiple of the essential supremum.}  that is, the class of  spectral risk measures. We obtain an equivalent condition on the random vector for which a given spectral risk measure is additive.

Spectral risk measures, as introduced in \cite{A02}, take the form 
$$\rho_g(X) = \int_0^1  g(t)Q_X(t) \d t$$ for some increasing function $g:[0,1] \to \mathbb R^+$ that integrates to $1$.  Since there are only countably many discontinuities, we may modify the risk spectrum $g$ to be left-continuous without changing the value of the integral, so that we may assume that $g\in \mathcal G$. Denote the collection of risk spectrums by $\mathcal G_1$, which can be considered as a subset of $\mathcal G$.  

Spectral risk measures is   a special case of a distortion riskmetric. Define a distortion riskmetric $I_h$ by letting the distortion function $h$ have   its conjugate function $\hat h$ to be the integral of $g$, we have that $\rho_g = I_h$. In the special setting of spectral risk measures, we will be able to provide a result stronger than Theorem \ref{thm:KCA}. We first rephrase Theorem \ref{thm:KCA} in this context.

\begin{corollary} Let  $g_1\in \mathcal G_1$ and  $g_2\in \mathcal G$.  
    The spectral risk measure $\rho_{g_1}$ is $g_2$-additive if and only if $g_1 \precsim g_2$.
\end{corollary}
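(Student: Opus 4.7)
The plan is to reduce the statement to Theorem \ref{thm:KCA} using the equivalence between function-indexed and set-indexed partial comonotonicity just established. By item (\ref{item:gtoIg}) of the preceding theorem, $g_2$-comonotonicity coincides with $\mathcal I(g_2)$-concentration, hence $\rho_{g_1}$ is $g_2$-additive if and only if it is $\mathcal I(g_2)$-additive. Writing $\rho_{g_1} = I_h$, the conjugate satisfies $\hat h(t) = \int_0^t g_1(s)\,\mathrm{d}s$, so that $\hat h' = g_1$ almost everywhere. Set $K = \mathcal I(g_2)$ and let $\{B_i\}$ be the maximal open intervals composing $K^c$; by the construction of $\mathcal I$, these are precisely the maximal open intervals on which $g_2$ is constant.

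Now apply Theorem \ref{thm:KCA}: $\rho_{g_1}$ is $K$-additive if and only if $h$ is linear on each $\mathsf{cl}(1-B_i)$, equivalently $\hat h$ is linear on each $\mathsf{cl}(B_i)$. Since $\hat h$ is the antiderivative of $g_1$, linearity of $\hat h$ on such a closed interval is equivalent to $g_1$ being constant on that interval, once left-continuity of $g_1 \in \mathcal G$ is used to upgrade an a.e.\ constancy statement to pointwise constancy at the relevant endpoints. The condition thereby becomes: $g_1$ is constant on every equivalence class of $g_2$.

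It remains to recognize this as $g_1 \precsim g_2$. The forward direction is immediate: if $g_1 = h \circ g_2$ for some increasing $h$, then $g_1$ is constant wherever $g_2$ is constant. For the converse, define $h$ on the range of $g_2$ by $h(g_2(t)) = g_1(t)$; well-posedness uses exactly the constancy assumption, and $h$ is increasing on its domain because both $g_1$ and $g_2$ are increasing (if $g_2(s) < g_2(t)$ with $s < t$, then $g_1(s) \leq g_1(t)$). Extending $h$ to an increasing function on all of $\R$ is standard. The main delicate point in this chain is the second step, reconciling linearity of $\hat h$ on the closed interval $\mathsf{cl}(B_i)$ with constancy of $g_1$ on that closure including endpoints; this is where the left-continuity assumption on $g_1$ and the matching between the $\mathsf{cl}(B_i)$ and the equivalence classes of $g_2$ are essential.
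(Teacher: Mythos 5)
You follow the paper's route exactly: pass from $g_2$-additivity to $\mathcal I(g_2)$-additivity via the equivalence theorem of Section \ref{sec:func}, apply Theorem \ref{thm:KCA} with $\hat h$ the antiderivative of $g_1$, and translate linearity of $\hat h$ on the closures of the components of $\mathcal I(g_2)^c$ into a constancy condition on $g_1$. Your only deviation is at the very end, where you verify the equivalence with $g_1\precsim g_2$ by constructing the increasing map $h$ with $h\circ g_2=g_1$ directly, while the paper routes through $\mathcal I(g_1)\subseteq\mathcal I(g_2)$ and Proposition \ref{prop:adj}; that difference is cosmetic.

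However, the step you yourself flag as delicate does not go through as written. Linearity of $\hat h(t)=\int_0^t g_1(s)\d s$ on $\mathsf{cl}(B_i)=[a_i,b_i]$ forces $g_1$ to be constant on $(a_i,b_i]$ (left-continuity does handle the right endpoint), but it gives no information about $g_1(a_i)$: the antiderivative is blind to the value of $g_1$ at a single point, and left-continuity works in the wrong direction at a left endpoint. Since an equivalence class of $g_2$ may contain its left endpoint (this happens whenever $g_2$ is continuous there), the condition delivered by Theorem \ref{thm:KCA} is strictly weaker than ``$g_1$ is constant on every equivalence class of $g_2$.'' Concretely, take $g_2(t)=\min(t,1/2)$ and $g_1=2\,\mathds{1}_{(1/2,1]}$, so that both lie in $\mathcal G$ and $\rho_{g_1}=\ES_{1/2}$. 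Then $\hat h(t)=2(t-1/2)^+$ is linear on $[1/2,1]$, and indeed $\rho_{g_1}$ is $g_2$-additive: if $Q_Z=g_2$ and each $X_i$ is comonotonic with $Z$, the event $\{Z=1/2\}$ has probability $1/2$ and is a common $1/2$-tail event, so the vector is $1/2$-concentrated and $\ES_{1/2}$ is additive on it (Example \ref{ex:ES}). Yet $g_2(1/2)=g_2(1)$ while $g_1(1/2)=0\neq 2=g_1(1)$, so no increasing $h$ satisfies $g_1=h\circ g_2$, i.e.\ $g_1\not\precsim g_2$. Thus your intermediate claim, and with it the ``only if'' direction, holds only up to an almost-everywhere modification of $g_1$ (equivalently, with the criterion stated as $\mathcal I(g_1)\subseteq\mathcal I(g_2)$), or under an additional matching of one-sided continuity at left endpoints. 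To be fair, the paper's own proof buries the same endpoint issue in its final assertion that $\mathcal I(g_1)\subseteq\mathcal I(g_2)$ is equivalent to $g_1\precsim g_2$, of which Proposition \ref{prop:adj} supplies only one implication; so your attempt reproduces the paper's argument together with this shared defect rather than introducing a new one.
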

\begin{proof}
    Let $I_h$ be the Choquet integral such that $I_h = \rho_g$. That is, $\hat h^\prime(t) = g(t)$. By Theorem \ref{thm:KCA}, for $K \in \mathcal S$, $\rho_g$ is $K$-additive if and only if $\hat h$ is linear on each connected component of $K^c$, which is equivalent to $g$ being constant on each connected component of $K^c$. But $g$ is constant on each connected component of $\mathcal I(g)^c$, therefore $\rho_g$ is $K$-additive if and only if $K^c \subseteq \mathcal I(g)^c$. Hence $\rho_{g_1}$ is $g_2$-additive if and only if $\mathcal I(g_1) \subseteq \mathcal I(g_2)$, if and only if  $g_1 \precsim g_2$.
\end{proof}

\begin{theorem} \label{EgCA} Let $g\in \mathcal G_1$ and $\mathbf X$ be a random vector. 
    The spectral risk measure $\rho_g$ is additive on $\mathbf X$ if and only if $\mathbf X$ is $g$-comonotonic.
\end{theorem}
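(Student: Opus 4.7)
For the ``if'' direction, since $g\precsim g$, applying the preceding Corollary with $g_1=g_2=g$ immediately gives that $\rho_g$ is $g$-additive, which is exactly what the theorem asserts for every $g$-comonotonic $\mathbf X$.

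For the converse, my plan is to compare $\mathbf X$ to its comonotonic rearrangement and then reduce to an Expected Shortfall identity. Set $X_i^{\ast}:=Q_{X_i}(U)$ for a uniform $U$; then $(X_1^{\ast},\dots,X_d^{\ast})$ is comonotonic with the same marginals as $\mathbf X$, so $\rho_g(\sum X_i^{\ast})=\sum\rho_g(X_i)$ by comonotonic additivity and law invariance, and the hypothesis forces $\rho_g(\sum X_i)=\rho_g(\sum X_i^{\ast})$. By the classical convex-order maximality of the comonotonic coupling, $\sum X_i\lcx\sum X_i^{\ast}$, whence
$$F(p):=\int_p^1\bigl[Q_{\sum X_i^{\ast}}(t)-Q_{\sum X_i}(t)\bigr]\d t\geq 0,\qquad p\in[0,1],$$
with $F$ continuous (the $X_i$ are bounded) and $F(0)=F(1)=0$. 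Integrating $\rho_g(X)=\int_0^1 g(t)Q_X(t)\d t$ by parts, the boundary terms drop out and the assumed equality reduces to
$$0=\rho_g\bigl(\textstyle\sum X_i^{\ast}\bigr)-\rho_g\bigl(\textstyle\sum X_i\bigr)=\int_0^1 F(t)\,\d g(t).$$

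Since $F\geq 0$ and the Stieltjes measure $\d g$ is nonnegative, $F$ must vanish on the support of $\d g$. Unpacking the construction of $\mathcal I(g)$ shows this support equals $\mathcal I(g)$: a point $p$ fails to be in $\mathrm{supp}(\d g)$ iff $g$ is constant on an open neighborhood of $p$, iff $p$ lies in the interior of a maximal constancy interval of $g$. For each $p\in\mathcal I(g)\cap[0,1)$, $F(p)=0$ rewrites as $\ES_p(\sum X_i)=\sum\ES_p(X_i)$, and by the dual representation $(1-p)\ES_p(Y)=\sup_{\P(A)=1-p}\E[Y\mathds{1}_A]$, this equality holds iff there is a single event $A$ with $\P(A)=1-p$ attaining the supremum for each $X_i$ simultaneously, i.e., iff the $X_i$ share a common $p$-tail event. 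Hence $\mathbf X$ is $\mathcal I(g)$-concentrated, which by item (\ref{item:gtoIg}) of the preceding theorem is equivalent to $g$-comonotonicity. The main technical obstacles are the integration by parts for a left-continuous increasing $g$ that may carry atoms and the identification $\mathrm{supp}(\d g)=\mathcal I(g)$; both are expected to be routine with appropriate care at the endpoints and at jumps of $g$.
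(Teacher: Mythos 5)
Your proof is correct, but it takes a genuinely different route from the paper. The paper argues in one shot with the rearrangement (Hardy--Littlewood) inequality $\E(XY)\le\E(X^cY^c)$, with equality iff comonotonicity: writing $\rho_g(X+Y)=\int_0^1 Q_Z Q_{X+Y}\,\mathrm dt\ge\E\bigl(Z(X+Y)\bigr)=\E(ZX)+\E(ZY)\le\rho_g(X)+\rho_g(Y)$ for $Z=g(U)$ built from the distributional transform of the sum, so that additivity forces $Z$ to be comonotonic with each component, giving $g$-comonotonicity directly and self-containedly (no appeal to the set-indexed machinery or to ES). You instead express the additivity gap as $\int_0^1 F(t)\,\mathrm dg(t)$ with $F(p)=(1-p)\bigl[\sum_i\ES_p(X_i)-\ES_p(\sum_i X_i)\bigr]\ge 0$, conclude $F=0$ on $\mathrm{supp}(\mathrm dg)$, identify $\mathrm{supp}(\mathrm dg)$ with $\mathcal I(g)$, invoke the Wang--Zitikis lemma that $\ES_p$-additivity is equivalent to $p$-concentration at each such level, and then convert $\mathcal I(g)$-concentration back to $g$-comonotonicity via the Section \ref{sec:func} equivalence theorem; the ``if'' direction you obtain by citing the preceding Corollary with $g_1=g_2=g$, which is legitimate and non-circular. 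Your route buys an explicit ``spectral risk measure as mixture of ES'' picture that localizes exactly where additivity forces concentration, at the cost of leaning on the ES-additivity characterization (which you only sketch: the clean argument is to take a $p$-tail event of the sum and show attainment propagates to each $X_i$; alternatively just cite \cite{WZ21}) and on two technical identifications you flag as routine -- they are, though note $\mathrm{supp}(\mathrm dg)$ and $\mathcal I(g)$ can differ at the endpoints $0,1$ (e.g.\ when $g$ is constant near $0$), which is harmless only because $0$- and $1$-concentration are trivially satisfied, a point worth stating. The paper's argument is shorter and independent of Sections \ref{sec:set}--\ref{sec:func}; yours reuses more of the paper's structure and makes the link to Example \ref{ex:ES} transparent.
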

The proof uses the following lemma.
\begin{lemma}\label{728}(\cite{MFE15}, Theorem 7.28) 
     Let $X,Y$ be random variables with finite positive variance, and $X^c, Y^c$ be comonotonic copies of $X,Y$.  Then $\E (XY) \leq \E(X^cY^c)$ with equality holding if and only if $X,Y$ are comonotonic.
\end{lemma}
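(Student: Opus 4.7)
The plan is to reduce the question to a dual identification $\rho_g(X) = \E[XZ]$ for a suitable $Z$ with $Q_Z = g$, and then combine this identity with the stated lemma of \cite{MFE15}.

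First, I observe that for any $Z$ with $Q_Z = g$, if $(X, Z)$ is comonotonic, then $(X, Z) \laweq (Q_X(U), g(U))$ for a uniform $U$ on $(0,1)$, hence $\E[XZ] = \int_0^1 Q_X(u) g(u) \d u = \rho_g(X)$. For arbitrary $(X, Z)$ with $Q_Z = g$, the lemma yields $\E[XZ] \leq \rho_g(X)$, with equality if and only if $(X, Z)$ is comonotonic. This reduces the additivity question to a statement about the bilinear form $(X, Z) \mapsto \E[XZ]$.

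For the forward direction, suppose $\mathbf X = (X_1, \dots, X_d)$ is $g$-comonotonic with witness $Z$. Then $\rho_g(X_i) = \E[X_i Z]$ for each $i$. The key auxiliary step is to verify that $(\sum_i X_i, Z)$ is itself comonotonic: comonotonicity of $(Y, Z)$ is equivalent to
$$(Y(\omega) - Y(\omega'))(Z(\omega) - Z(\omega')) \geq 0 \quad \text{for a.s.} ~(\omega, \omega') \in \Omega^2,$$
a property that is clearly preserved under summation in $Y$. Hence $\rho_g(\sum_i X_i) = \E\bigl[Z \sum_i X_i\bigr] = \sum_i \E[Z X_i] = \sum_i \rho_g(X_i)$.

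For the converse, assume $\rho_g(\sum_i X_i) = \sum_i \rho_g(X_i)$. Using Lemma A.32 of \cite{FS16} (already invoked in Section \ref{sec:set}), I pick a uniform $U$ on $(0,1)$ such that $\sum_i X_i = Q_{\sum_i X_i}(U)$ almost surely, and set $Z := g(U)$. Then $Q_Z = g$ and $(\sum_i X_i, Z)$ is comonotonic by construction, so $\rho_g(\sum_i X_i) = \E[Z \sum_i X_i] = \sum_i \E[Z X_i]$. Combining with the assumed additivity and the pointwise inequalities $\E[Z X_i] \leq \rho_g(X_i)$ coming from the lemma, each of these inequalities must saturate. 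The equality clause of the lemma then forces $(X_i, Z)$ to be comonotonic for every $i$, exhibiting $Z$ as a witness for $g$-comonotonicity of $\mathbf X$.

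The main obstacle is verifying that variables each comonotonic with $Z$ sum to something still comonotonic with $Z$; this is handled by the pointwise characterization above. A minor detail is the existence of a random variable with quantile function exactly $g$, which is immediate from the atomless assumption on $(\Omega, \cF, \P)$ since $g \in \mathcal G$ is left-continuous and increasing.
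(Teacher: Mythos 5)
There is a fundamental mismatch here: what you have written is not a proof of the stated lemma at all. The lemma to be proved is the classical Hoeffding--Fr\'echet bound $\E(XY)\leq \E(X^cY^c)$ together with its equality characterization, which the paper simply cites from \cite{MFE15} (Theorem 7.28) and then uses as the key ingredient in the proof of Theorem \ref{EgCA}. Your text instead takes this lemma as given (``combine this identity with the stated lemma of \cite{MFE15}'') and uses it to prove Theorem \ref{EgCA} itself --- essentially reproducing the paper's argument for that theorem, including the reduction $\rho_g(X)=\E[XZ]$ for comonotonic $(X,Z)$ and the distributional-transform construction of the witness $Z=g(U)$. As a proof of the lemma this is circular: the very inequality and its equality clause that you are asked to establish appear as an unproven hypothesis in your argument.

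A genuine proof of the lemma would proceed along entirely different lines. The standard route is Hoeffding's covariance identity, $\E(XY)-\E X\,\E Y=\int_{\R^2}\bigl(\P(X\leq x, Y\leq y)-F_X(x)F_Y(y)\bigr)\d x \d y$, valid for bounded $X,Y$. Since $X^c\laweq X$ and $Y^c\laweq Y$ and the joint distribution function of a comonotonic pair is $\min(F_X,F_Y)$, subtracting gives $\E(X^cY^c)-\E(XY)=\int_{\R^2}\bigl(\min(F_X(x),F_Y(y))-F_{X,Y}(x,y)\bigr)\d x\d y$, and the integrand is nonnegative by the Fr\'echet--Hoeffding upper bound. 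Equality forces the (right-continuous) integrand to vanish everywhere, i.e.\ $F_{X,Y}=\min(F_X,F_Y)$, which is exactly comonotonicity. None of this machinery appears in your proposal, so the statement remains unproved.
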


\begin{proof}[Proof of Theorem \ref{EgCA}]
Let $Z$ be some random variable with $Q_Z(t) = g(t)$. Given a random vector $\mathbf X = (X,Y)$, consider the following expressions:
\begin{align*}
    \rho_g(X+Y) &= \int_0^1 Q_Z(t)Q_{X+Y} (t) \d t\\
    &\geq \E\big(Z(X+Y)\big) = \E(ZX)+\E(ZY),
\end{align*}
and
\begin{align*}
    \E(ZX)+\E(ZY) &\leq \int_0^1 Q_Z(t)Q_{X} (t) \d t+\int_0^1 Q_Z(t)Q_{Y} (t) \d t\\
    &= \rho(X)+\rho(Y).
\end{align*}
$(\Leftarrow):$ If $(X,Y)$ is $g$-comonotonic, pick $Z$ as in the definition so that $(X,Z)$ and $(Y,Z)$ are comonotonic. Hence, $Z$ and $X+Y$ are comonotonic, so all inequalities become equalities.

$(\Rightarrow):$ Conversely, let $U$ be the distributional transform (\citet[Lemma A.32]{FS16}) of $X+Y$, i.e., $X+Y = Q_{X+Y}(U)$. Let $Z = g(U)$, $Z$ and $X+Y$ are comonotonic because they are both increasing functions of $U$. Then the first inequality is an equality, but $Z$ cannot be comonotonic with both $X$ and $Y$, so the second inequality is strict.
\end{proof}

\begin{remark}
    Theorem \ref{EgCA} is stronger in the following sense.  In Theorem \ref{thm:KCA}, an $\mathbf X$ not satisfying the corresponding $K$-concentration may still be additive under $I_h$: it only requires that any $\mathbf X$ satisfying $K$-concentration be additive, and that some $\mathbf X$ not satisfying $K$-concentration be not additive. However, in the context of spectral risk measures, any $\mathbf X$ not satisfying $K$-concentration cannot be additive.
\end{remark}

\begin{example}
    For $p \in (0,1)$, let  $g_p(x) = \cdot \mathds{1}_{[p,1]}(x)/(1-p)$  for  $x\in [0,1]$. The spectral risk measure induced by $g_p$ is $\ES_p$. The corresponding closed set $K$ is $\mathcal I(g_p) = \{p\}$.  Theorem \ref{thm:KCA} would imply that $\ES_p$ is $p$-additive, and that any $p$-additive distortion riskmetric $\rho$ is of the form     $\rho = \alpha \ES_p + \beta \E$ for some $\alpha,  \beta \in \R$.  Theorem \ref{EgCA} would further imply that, if $\mathbf X$ is not $p$-concentrated, then $\ES_p$ is not additive on $\mathbf X$.
\end{example}

\section{Applications}\label{sec:app}

\subsection{Additivity of well-known functionals}
 Recall that comonotonicity is equivalent to $[0,1]$-concentration, so that all distortion riskmetrics are $[0,1]$-additive. As an application of Theorem \ref{thm:KCA}, we discuss several well-known distortion riskmetrics listed in \cite{WWW20b}, and show that some of them are in fact $K$-additive for $K$ smaller than $[0,1]$. 
\begin{example}
    Fix some $p\in (0,1)$ and take $\rho = \VaR_p$. We proved directly in Theorem \ref{thm:quant} that $\rho$ is $K$-additive if and only if there exists a sequence $\{p_n: n\in \mathbb N\} \subseteq K$ such that $p_n \uparrow p$. We illustrate how we can arrive at the same conclusion using Theorem \ref{thm:KCA}. 

    The distortion function corresponding to $\rho$ is 
    $h: t \mapsto \mathds{1}_{\{t>1-p\}}.$ We have that $h$ is linear on $[0,1-p]$ and $(1-p,1]$. Let $K = \{p\}$. Why is $\rho$ not $K$-additive? Ignoring the points $\{0,1\}$, the open interval decomposition of $K^c$ is $B_1 = (0,p)$ and $B_2 = (p,1)$. So that $$\mathsf{cl}(1-B_1) = [1-p, 1],~~\mathsf{cl}(1-B_2) = [0, 1-p].$$
    However, $h$ is linear on $(1-p,1]$ but not on $[1-p,1]$. According to Theorem \ref{thm:KCA}, for $\rho$ to be $K'$-additive, a necessary condition is that for any $\epsilon > 0$, $[1-p, 1-p+\epsilon]$ is not one of the $\mathsf{cl}(1-B_i)$'s for $K'$, since $h$ is not linear on $[1-p, 1-p+\epsilon]$. Equivalently, $(p-\epsilon, p) \cap K' \neq \emptyset$ for all $\epsilon > 0$. This implies that there exists $\{p_n : n \in \mathbb N\} \subseteq K$ such that $p_n \uparrow p$. 
\end{example}

\begin{example}
    Take $\rho$ to be the mean-median deviation. That is, 
    $$\rho(X) = \min_{x\in \R} \E[|X-x|].$$
    The distortion function of $\rho$ is $h: t\mapsto t \wedge (1-t)$, so that $h$ is linear on $[0,0.5]$ and $[0.5,1]$. Therefore $\rho$ is $K$-additive for $K$ being the singleton set $\{0.5\}$. 
\end{example}

\begin{example}
    Let $\rho$ be the Gini Shortfall, defined by $$\mathrm{GS}_\alpha^\lambda (X)= \mathrm{ES}_\alpha(X) + \lambda \mathbb{E}\Bigl[ \bigl| X_\alpha^* - X_\alpha^{**} \bigr| \Bigr], \quad \alpha \in (0,1),\ \lambda \geq 0,$$
    where $X_\alpha^*$ and $X_\alpha^{**}$ are iid copies of $Q_X(U_\alpha)$, with $\alpha$ uniform on $[\alpha, 1]$. 
    The distortion function of $\rho$ is
    $$h(t) = \frac{t}{1 - \alpha}\wedge 1 + \frac{2\lambda t (1 - t - \alpha)_{+}}{(1 - \alpha)^2}.$$
    For $\lambda \neq 0$, $h$ is linear only on $[1-\alpha, 1]$, so that the smallest $K\in  \mathcal S$ for which $\rho$ is $K$-additive is $K = [\alpha, 1]$. This was upper comonotonicity in Example \ref{ex:1}.
\end{example}
\begin{example}
    Take $\rho$ to be the proportional hazard principle, also called $\mathrm{MAXVAR}$,  defined by 
    $$X \mapsto \frac{1}{\alpha} \int_{0}^{1} (1 - t)^{(1-\alpha)/\alpha} F_X^{-1}(t) \, \d t, \quad \alpha \geq 1.$$
    The distortion function of $\rho$ is $h: t \mapsto t^{1/\alpha}$. We see that if $\alpha \neq 1$, $h$ is not linear on any nondegenerate interval and for $\rho$ to be $K$-additive, $K$ has to be $[0,1]$ (recall that $K\in \mathcal S$ is closed). Hence $\rho$ is comonotonic additive but not $K'$-additive for any closed proper subset $K'\subsetneq [0,1]$.
\end{example}

\subsection{Selection of spectral risk measures via $K$-additivity}
A practitioner in risk management is choosing a risk measure. 
Law-invariance and coherence are desirable properties of risk measures, we suppose that the practitioner limits her choices  within the class of law-invariant coherent risk measures.
Suppose further that for some $K\in \mathcal S$, $K$-additivity is also a desirable property. Since comonotonicity implies $K$-concentration, $K$-additivity implies comonotonic additivity. This naturally restricts the choices to the class of spectral risk measures. 

Two questions arise. First, how is the set $K$ determined to suit the practitioner's needs? Second, once $K$ is determined, how does this help the practitioner choose a desirable risk measure. 

The first question is broad and we discuss one consideration.  Coherent risk measures are subadditive, and when additivity occurs, it is interpreted that the involved random losses do not diversify risks. Therefore, $K$ can be chosen to model the collection of undesirable dependencies. 

The second question is answered by Theorem \ref{EgCA} and Theorem \ref{thm:3}. That is, all risk measures that are law-invariant, coherent and $K$-additive are given by 
$$\{\rho_g: g\in \mathcal G_1 \text{~and~} \mathcal I(g) \subseteq K\}.$$

If we are precise about the undesirable dependencies $K$, in the sense that we want our risk measure $\rho$ to not be $K'$-additive for any of $\{K'\in \mathcal S:~  K'\subsetneq K\}$, we may restrict the collection of risk measures to be
$$M_K := \{\rho_g: g\in \mathcal G_1 \text{~and~} \mathcal I(g) = K\}.$$

When $K$ is a finite set, $M_K$ has a simple description. 
\begin{theorem}
    Suppose $K = \{\alpha_0, \alpha_1, \dots, \alpha_n, \alpha_{n+1}\}$. Then $M_K$ is of the form 
    $$M_K = \left \{\lambda_0\E  + \lambda_1 \ES_{\alpha_1} + \cdots \lambda_n \ES_{\alpha_n} : \sum_{i=0}^n \lambda_i = 1,~ \lambda_0 \geq 0,  \lambda_1, \dots , \lambda_n > 0\right \}.$$
\end{theorem}
\begin{proof}
    Without loss of generality, suppose $0= \alpha_0 <\alpha_1 < \cdots < \alpha_n< \alpha_{n+1}=1$. Since $\mathcal I(g) = K$ and $g\in \mathcal G_1 \subseteq \mathcal G$, the function $g$ is of the form 
    $$g = \gamma_0 \mathds{1}\{[0, \alpha_1]\}+ \left(\sum_{i=1}^{n} \gamma_i \mathds{1}\{(\alpha_i, \alpha_{i+1}] \}\right) $$
    for some $0 \leq \gamma_0 < \gamma_1 < \cdots < \gamma_n$. Let $\lambda_0 = \gamma_0$ and 
    $$ \lambda_i = (\gamma_i - \gamma_{i-1})(1-\alpha_i) \mbox{~for~}i=1, \dots, n.$$   
    We can rewrite $g$ as
    $$g = \lambda_0 \mathds 1\{[0, 1]\} + \left(\sum_{i=1}^n \lambda_i \frac{\mathds{1}\{(\alpha_i, 1]\}}{1-\alpha_i}\right)$$
    so that $\rho_g = \lambda_0\E  + \lambda_1 \ES_{\alpha_1} + \cdots \lambda_n \ES_{\alpha_n}$.
    Clearly $\lambda_0 \geq 0$ and $\lambda_i > 0$ for $i = 1, \dots, n$. Finally, we check the sum: 
    \begin{align*}
        \sum_{i=0}^n \lambda_i &= \gamma_0 + \sum_{i=1}^n \lambda_i \frac{1-\alpha_i}{1-\alpha_i}\\
        &= \int_0^1 g(t) \d t = 1. \qedhere
    \end{align*}
\end{proof}

The following example illustrates these ideas.
\begin{example}
    Let $\{A,B,C,D\}$ be a partition of $\Omega$ with $\P(A)=\P(B)=\P(C)=0.05$. Suppose we hold the random loss $X$ in our portfolio and wish to diversify our risk with candidate random losses $X_1, X_2, X_3$ and $X_4$,  defined in the following table, where $U_A, U_B, U_C$ are random variables with conditional distributions $U_A|A, U_B|B, U_C|C$ all being uniform $[0,1]$. 
\begin{table}[htbp]
\centering
\begin{tabular}{ccccc}
\toprule
& D & C & B & A\\
\midrule
$X$ & 0 & $0.5+U_C$ & $1.5+U_B$ & $2.5+U_A$\\
$X_1$ & 0 & 1 & 2 & 3\\
$X_2$ & 0 & 1 & 3 & 2\\
$X_3$ & 0 & 2 & 1 & 3\\
% $X_4$ & 0 & $1.5-U_C$ & $2.5-U_B$ & $3.5-U_A$\\
\bottomrule
\end{tabular}~~
\end{table}

Suppose that we are concerned about the catastrophic events $A$ and $A\cup B$, so that $X_1$ should be the most undesirable asset for the purpose of diversification.  Naturally, the relevant dependence structure is $K$-concentration with $K = \{0.9, 0.95\}$. The corresponding $g = \mathcal V(K)$ is $0.9 \cdot \mathds 1\{(0.9, 0.95]\} + 0.95 \cdot \mathds 1\{(0.95,1]\}$. Thus, the collection of risk spectrums equivalent to $g$ are of the form 
$$g = \gamma_0 \cdot \mathds 1\{[0,0.9]\} + \gamma_1 \cdot \mathds 1\{(0.9, 0.95]\} + \gamma_2 \cdot \mathds 1\{(0.95,1]\} $$ with $0 \leq \gamma_0 < \gamma_1 < \gamma_2$, $a+b+18c = 20$. We can rewrite $g$ as 
\begin{align*}
    g &= \gamma_0 \mathds{1}\{[0,1]\}+ (\gamma_1-\gamma_0) \cdot \mathds 1\{(0.9, 1]\} + (\gamma_2-\gamma_1) \cdot \mathds 1\{(0.95,1]\}
\end{align*}
Take $\lambda_0 = \gamma_0, \lambda_1 = (\gamma_1-\gamma_0)/10$ and $\lambda_2 = (\gamma_2-\gamma_1)/20$, we see that the corresponding spectral risk measure is 
$$\rho_g = \lambda_0 \E + \lambda_1 \ES_{0.9}+ \lambda_2 \ES_{0.95}.$$ Let $\rho$ be the risk measure for  $\lambda_0 = 0, \lambda_1 = \lambda_2 = 1/2$. The risk values are computed in the following table.
\begin{table}[htbp]
    \centering
    \begin{tabular}{cccc}
\toprule
& $\ES_{0.9}$ &$\ES_{0.95}$  & $\rho$ \\
\midrule
$X_1+X$ & 5 & 6 & 5.5 \\
$X_2+X$ & 5 & 5 & 5 \\
$X_3+X$ & 4.5 & 6 & 5.25 \\
% $X_4+X$ & 5 & 6 & 5.5 \\
\bottomrule
$2\rho(X)$ & 5 & 6 & 5.5
\end{tabular}
\end{table}

We see that $\ES_{0.9}$ is additive on $(X, X_1)$ and $(X, X_2)$, identifying $X_1$ and $X_2$ as undesirable,  and $\ES_{0.95}$ is additive on $(X, X_1)$ and $(X, X_3)$, identifying $X_1$ and $X_3$ as undesirable. On the other hand, $\rho$ uniquely identifies $X_1$ as the undesirable asset for diversification.
\end{example}

\section{Extension to a larger space of random variables}\label{sec:ext}
For $h\in \mathcal H^{\rm bv}$ and $X \in L^\infty$, the expression $I_h(X)$ as defined in \eqref{eq:DRM} is always finite. However, in practice, many commonly used distributions are not essentially bounded. In this section, we extend the domain of the distortion riskmetrics to spaces larger than $\mathcal X = L^\infty$. We first allow risk functionals to take the value $\infty$.

Define $C_h: L^0 \to \R  \cup \{-\infty, \infty, {\rm undefined}\}$ by the same formula $$
C_h(X) = \int_{-\infty}^0\left [(h(\P(X>x))-h(1)\right ]\d x + \int_0^\infty h(\P(X>x))\d x
$$
where $-\infty + \infty$ is undefined.
Given $h\in \mathcal H^{\rm bv}$, we choose a convex cone $\mathcal X_h$ with $L^\infty \subseteq \mathcal X_h \subseteq L^0$ such that for any $X\in \mathcal X_h$, $C_h(X)\in (-\infty, \infty]$. The convex cone requirement ensures that if $X,Y\in \mathcal X_h$, then $X+Y$ is also in $\mathcal X_h$. We define a distortion riskmetric associated to $h$ by $$J_h: \mathcal X_h \to (-\infty, \infty],~ X \mapsto C_h(X).$$ 

We show that it is possible to take  $\mathcal X_h$ to include a larger space than $L^\infty$. Let $\delta_1 \in \mathcal{H}^{\rm bv}$ be defined by $$\delta_1(t) = \begin{cases}
    1 & {\rm if~} t=1\\
    0 & {\rm otherwise},
\end{cases}$$ the signed Choquet integral $C_{\delta_1}$ is the essential infimum and we can take $\mathcal X_{\delta_1}$ to be the set of all random variables $X$ such that ${\rm ess\text{-}inf}(X)>-\infty.$  Let 
$$c = \sup_{t\in [0,1]} |h(t)|.$$ Since $h$ has finite variation, $c  < \infty$. For $X\in \mathcal X_{\delta_1}$, we have  
\begin{align*}
    \int_{-\infty}^0\left [(h(\P(X>x))-h(1)\right ]\d x &= \int_{{\rm ess\text{-}inf} (X)\wedge 0}^0\left [(h(\P(X>x))-h(1)\right ]\d x\\
    &\geq 2c\left ( {\rm ess\text{-}inf}(X)\wedge 0\right)>-\infty
\end{align*}
and 
$$\int_0^\infty h(\P(X>x))\d x \geq 0$$
so that $C_h(X) \in (-\infty, \infty]$. Therefore $\mathcal X_h$ can be chosen to satisfy $\mathcal X_{\delta_1} \subseteq \mathcal X_h \subseteq L^0$. 
In particular, commonly used distributions such as Gamma and Pareto are non-negative, and are included in $\mathcal X_h$.

% The expression $C_h$ are uniquely parametrized as argued in \cite{WWW20b}. Take $X$ following a Bernoulli distribution with parameter $p\in [0,1]$, we have $C_h(X) = h(p)$ so that  $C_{h_1} = C_{h_2}$ on $L^\infty$ implies that  $h_1(p) = h_2(p)$ for all $p\in [0,1]$.  In particular, if the restriction $J_{h_2}|_{L^\infty}$ equals $I_{h_1}$, we have $h_1 = h_2$. 

The definition of $K$-concentration and $g$-comonotonicity remains valid when the random vector $\mathbf X$ is taken to be in $(\L^0)^d$. The definitions of additivity of a distortion riskmetric remains unchanged, noting that $\mathcal X$ is the respective $\mathcal X_h$ instead of $L^\infty$. 
It is straightforward to verify that the $(\Leftarrow)$ direction of Theorem \ref{thm:KCA} remains valid. The $(\Rightarrow)$ direction is illustrated by the following theorem.
\begin{theorem}
    Fix $K \in \mathcal S$, suppose there exists a subset $\mathcal H_K$ of $\mathcal H^{\rm bv}$ such that $I_h$ being $K$-additive implies that $h\in \mathcal H_K$. Then $J_h$ being $K$-additive also implies that $h\in \mathcal H_K$.
\end{theorem}
\begin{proof}
    Let $J_h$ be $K$-additive and $(X,Y)$ be any $K$-concentrated random vector in $L^\infty$. Since $L^\infty \subseteq \mathcal X_h$ and $J_h$ is $K$-additive, $J_h$ is additive for $(X,Y)$. This shows that the restriction $J_h|_{L^\infty} = I_h$ is $K$-additive, and hence $h\in \mathcal H_K$.
\end{proof}

However, since Lemma \ref{728} requires the random variables $X,Y$ to have finite variance, Theorem \ref{EgCA} does not generalize to this setting. Let $X,Y$ be non-negative random variables with infinite mean. Since any spectral risk measure $\rho$ satisfies $\rho \geq \E$, we have
$$\rho(X+Y) = \infty = \infty + \infty = \rho(X)+\rho(Y)$$ regardless of the dependence structure between $X$ and $Y$.

\section{Conclusion}\label{sec:conclusion}

We provided two equivalent formulations of partial comonotonicity: $K$-concentration, indexed by closed sets $K\subseteq [0,1]$ and $g$-comonotonicity, indexed by increasing left-continuous functions $g:[0,1]\to \R$. For a dependence structure $D$ from the above classes, we characterized the class of distortion riskmetrics that are additive for random vectors satisfying $D$.

In place of additivity, one may consider another interesting property: aversion, where risks are highest under a certain dependence structure $D$. Formally, a risk functional $\rho$ is $D$-averse if for any $(X,Y)$ satisfying $D$, any $X^\prime \laweq X$ and $Y^\prime \laweq Y$ satisfies
$$\rho(X+Y)\geq \rho(X^\prime+Y^\prime).$$
Under the assumption of subadditivity for the risk functional, $D$-additivity implies $D$-aversion. The validity of the converse condition reduces to whether additivity is always achievable, which is an interesting problem in itself.

When $D$ is comonotonicity, $D$-aversion was studied in \cite{MW20} and termed diversification consistency. When $D$ is $p$-concentration, $D$-aversion was studied in \cite{HWWW24}, and characterization results were established. For generic partial comonotonicities, the analysis seems more challenging and a theory remains to be established.

\section*{Acknowledgements}

The author  acknowledges financial support from the Queen Elizabeth II Graduate Scholarship in Science and Technology (QEII-GSST), the University of Waterloo President's Graduate Scholarship, the Statistics and Actuarial Science Chair's Award, the Mathematics Senate Graduate Scholarship, and Research Assistantship funding provided by the author’s supervisor Ruodu Wang. \\
The author would like to thank his supervisor for valuable guidance and support throughout the preparation of this paper.

\end{document}